\numberwithin{equation}{section}
\theoremstyle{plain}
\newtheorem{thm}{Theorem}[section]
\newtheorem{prop}{Proposition}[section]
\theoremstyle{remark}
\newtheorem{defi}{Definition}[section]
\newtheorem{rem}{Remark}[section]
\newcommand{\dN}{\mathbb{N}}
\newcommand{\dZ}{\mathbb{Z}}
\newcommand{\dR}{\mathbb{R}}
\newcommand{\dC}{\mathbb{C}}
\newcommand{\dE}{\mathbb{E}}
\newcommand{\cA}{\mathcal{A}}
\newcommand{\cB}{\mathcal{B}}
\newcommand{\cC}{\mathcal{C}}
\newcommand{\cL}{\mathcal{L}}
\newcommand{\cAS}{\mathcal{A}_{s}}
\newcommand{\cBS}{\mathcal{B}_{s}}
\newcommand{\veps}{\varepsilon}
\newcommand{\wh}{\widehat}
\newcommand{\wt}{\widetilde}
\def\build#1_#2^#3{\mathrel{\mathop{\kern 0pt#1}\limits_{#2}^{#3}}}
\def\videbox{\mathbin{\vbox{\hrule\hbox{\vrule height1ex \kern.5em
\vrule height1ex}\hrule}}}
\email{Sophie.Bercu@edf.fr}
\email{Frederic.Proia@inria.fr}
\keywords{SARIMA(X) modelling, Time series analysis, Exogenous covariates, Forecasting, Seasonality, Stationarity, Individual load curve}
\begin{document}

\title[On a SARIMAX coupled modelling]
{A SARIMAX coupled modelling applied to individual load curves intraday forecasting
\vspace{2ex}}
\author{Sophie Bercu}
\address{EDF Recherche \& D\'eveloppement, D\'epartement ICAME. 1, avenue du G\'en\'eral de Gaulle,
92141 Clamart Cedex.}
\author{Fr\'ed\'eric Pro\"ia}
\address{Universit\'e Bordeaux 1, Institut de Math\'ematiques de Bordeaux,
UMR 5251, and INRIA Bordeaux, team ALEA, 351 Cours de la Lib\'eration, 33405 Talence cedex, France.}
\thanks{}

\begin{abstract}
A dynamic coupled modelling is investigated to take temperature into account in the individual energy consumption forecasting. The objective is both to avoid the inherent complexity of exhaustive SARIMAX models and to take advantage of the usual linear relation between energy consumption and temperature for thermosensitive customers. We first recall some issues related to individual load curves forecasting. Then, we propose and study the properties of a dynamic coupled modelling taking temperature into account as an exogenous contribution and its application to the intraday prediction of energy consumption. Finally, these theoretical results are illustrated on a real individual load curve. The authors discuss the relevance of such an approach and anticipate that it could form a substantial alternative to the commonly used methods for energy consumption forecasting of individual customers.
\end{abstract}

\maketitle


\section{INTRODUCTION}


The electrical systems have to manage with new challenges: constant increasing demand of electricity, including the arrival of new uses such as electric vehicle, increasing production of renewable energies decentralized, increasing  number of energy market participants including aggregators, with a policy of DMS and CO$_2$ reduction. In particular, we note an increase in peaks load at critical times, intermittent injections of energy at all levels making electrical networks much more difficult to manage, the balance supply/demand more difficult to keep and economic interests much more dispersed. On the other hand, the technological advances are enabling new opportunities: the development of smart-metering and smart grid. Indeed, there is a massive deployment in Europe - 80\% of households will be equipped by 2020 - and in North America, mainly. In this context, forecasting consumption of very fine mesh becomes an important issue at the heart of the system of electric power. In this paper, we focus exclusively our attention on the prediction of end-customer on a short-term horizon. These forecasts are useful for the customers who want to optimize their bill and make DMS, for network planning and monitoring, for aggregators who wish to apply real-time load shifting. Forecasting methods on large aggregate of customers exist in abundance but cannot be extrapolated to individual curves because of their extremely irregular behavior. Indeed, aggregation has the advantage of reducing the noise and provides little chaotic curves in which trend and seasonality are easily identifiable. For example, in the case of short-term forecasting, the exponentially weighted methods of Taylor \cite{Taylor10} give pretty good results. Harvey and Koopman \cite{HarveyKoopman93} also developed an unobserved components model with time-varying splines to capture the evolution of patterns in hourly electricity loads. Afterwards, bayesian methods that rely on Kalman filter and state space models were suggested by Martin \cite{Martin99} and Smith \cite{Smith00}. Multiple time series approaches have also emerged to model and predict intraday aggregated load curves, one can cite in example the heteroskedastic GARCH model of Garcia \textit{et al.} \cite{GarciaContrerasVanAkkerenGarcia05}, the seasonal ARFIMA-GARCH model of Koopman \textit{et al.} \cite{KoopmanOomsCarnero07}, the robust SARIMA estimates of Chakhchoukh \textit{et al.} \cite{ChakhchoukhPanciaticiBondon09}, \cite{ChakhchoukhPanciaticiMili09}, or the bias reducing iterative algorithm of Cornillon \textit{et al.} \cite{CornillonHengartnerLefieuxMatzner09}. Lately, Dordonnat \textit{et al.} \cite{DordonnatKoopmanOomsDessertaineCollet08} have developed a very comprehensive space state approach for modelling hourly french national electricity load, taking into account different levels of seasonality, calendar events and weather dependence, with one equation for each hour. Semiparametric methods and artificial neural networks to model meteorological effects and seasonal patterns are considered by Cottet and Smith \cite{CottetSmith03}, and by Liu \textit{et al.} \cite{LiuChenLiuHarris06}. Poggi \cite{Poggi94} proposed a nonparametric approach based on kernel estimators to make forecasts on half-hourly french load curves and more technical studies from Antoniadis \textit{et al.} \cite{AntoniadisPaparoditisSapatinas06}, \cite{AntoniadisPaparoditisSapatinas09} based on functional kernel-wavelets can also be mentioned. In short, several methods have been implemented of various kinds, from nonparametric to parametric models with exogenous covariates through semi-parametric approaches, heteroskedasticity, space state models and neural networks, each of them providing excellent intraday forecast results on extensively averaged curves, such as national load. Let us conclude by refering the reader to the approach of Devaine \textit{et al.} \cite{DevaineGoudeStoltz09}, which is an aggregation of specialized experts combining a set of prediction outputs by independent forecasters.

\medskip

The issue of individual forecasting is complex and very few literature is available on the subject. The main difficulty of the individual load curves is the deep irregularity resulting from the human behavior. Indeed, we have to deal with phenomena that aggregation usually hides, such as high disturbances, unpredictable local behaviors or thresholds during holiday periods. To the best of our knowledge, very few studies have been conducted in this area. In their work, Espinoza \textit{et al.} \cite{EspinozaJoyeBelmansDeMoor05} are concerned by the short-term load forecasting from a HV-LV substation, and Ghofrani \textit{et al.} \cite{GhofraniHassanzadehEtezadiFadali11} propose to model real-time measurement data from customers' smart meters as the sum of a deterministic component and a gaussian noise signal. This paper suggests a statistical parametric approach adapted to an individual load curve which shows a substantial seasonal pattern and a thermosensitive behavior as prerequisites, highly relying on the time series theory. The authors anticipate that it could form an alternative to the commonly used methods for energy consumption forecasting of individual customers. In Section 2, we introduce a dynamic coupled modelling taking temperature into account. We also recall some known results on time series analysis, in particular the concepts of stationarity and causality. We recall some theoretical backgrounds which will be used as a basis for the empirical study on a real curve in the next section. As an exemple of load curve which we will investigate, Figure \ref{Fig_Ex} displays the energy consumption of a thermosensitive customer and the interpolated temperatures measured every 3 hours by the nearest weather station on the same period of 6 months. Section 3 is devoted to the detailed study on such a load curve based on time series analysis. The main motivations for proposing a coupled modelling are the linear relation between the logarithmic energy consumption and the temperature on the one hand, and the seasonal behavior of the residuals on the other hand. Figure \ref{Fig_ConsoTemp} represents the scatter plot between temperature and consumption for the same customer, in which one can observe the linear relationship. It also displays the residuals from the linear regression on the right-hand side. One shall investigate seasonality, stationarity and autocorrelations in the residuals from the linear regression, to build a suitable time series modelling and propose a forecasting algorithm, according to some criteria that will be specified. For that purpose, one shall make an extensive use of the well-known Box and Jenkins methodology \cite{BoxJenkinsReinsel76}. A short conclusion is given in Section 4.
\begin{figure}[h!]
\includegraphics[width=7.2cm]{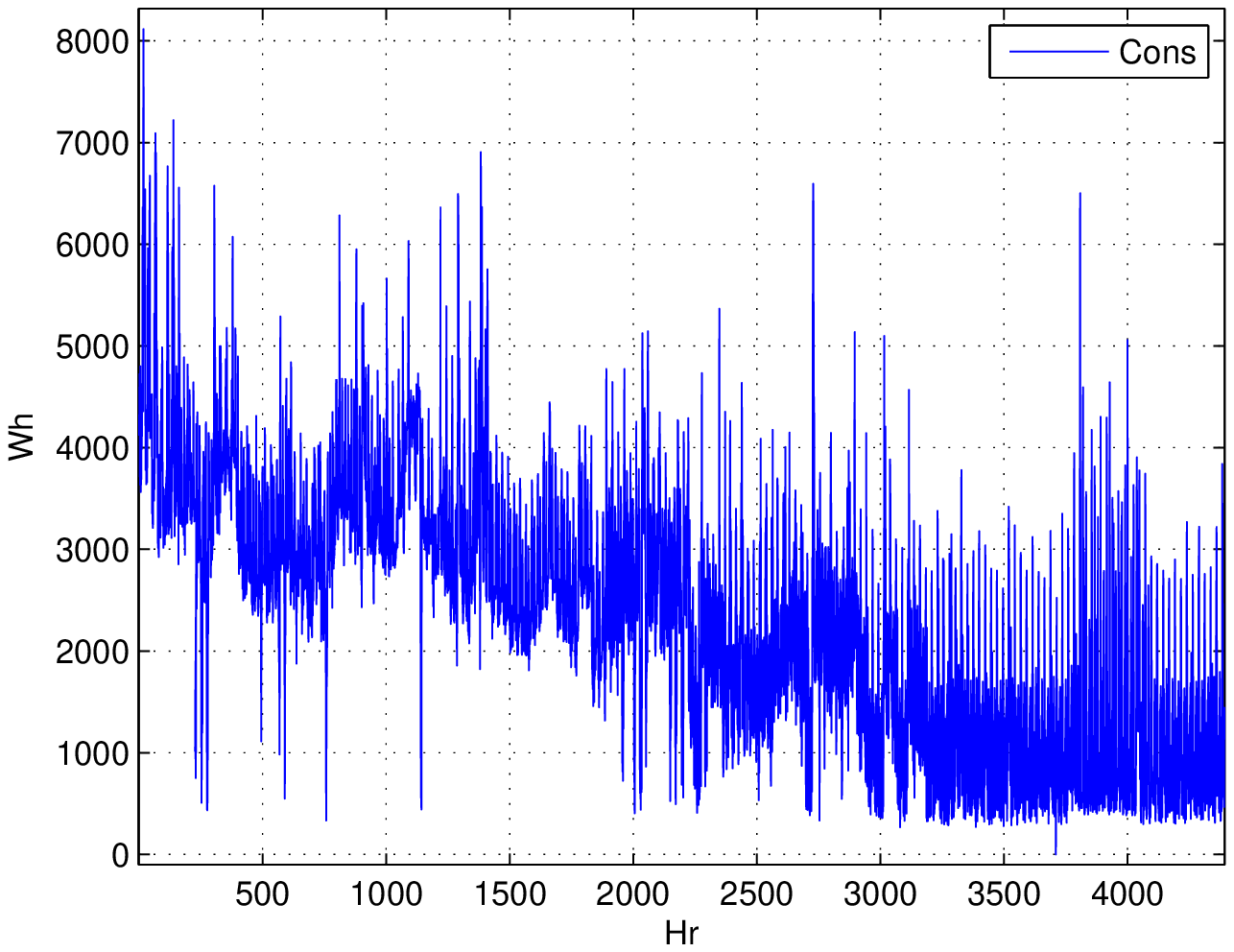} \includegraphics[width=7.2cm]{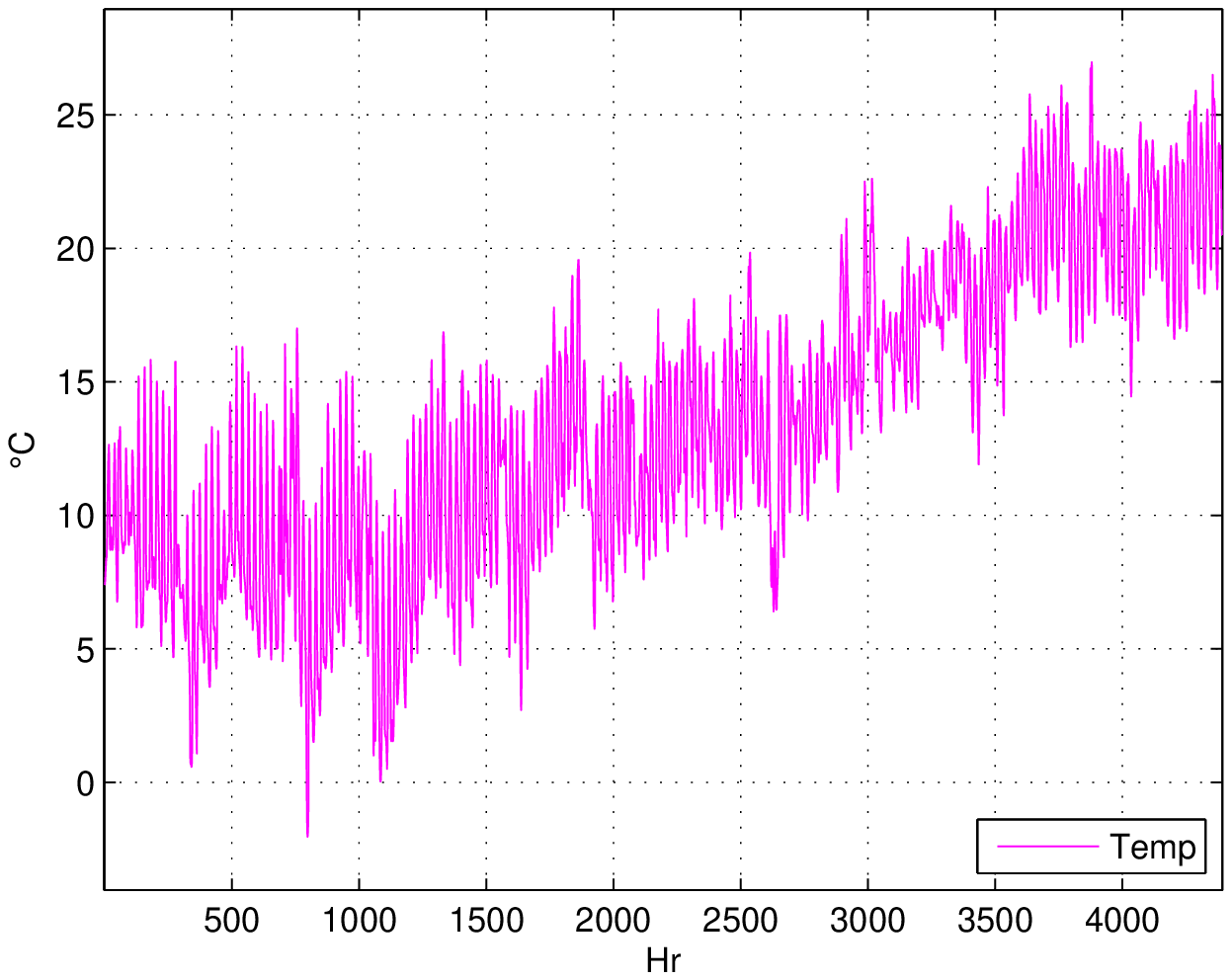}
\vspace{-0.5cm} \\
\caption{Energy consumption of a thermosensitive customer (left), temperature during the same period (right).}
\label{Fig_Ex}
\end{figure}
\begin{figure}[h!]
\includegraphics[width=7.2cm]{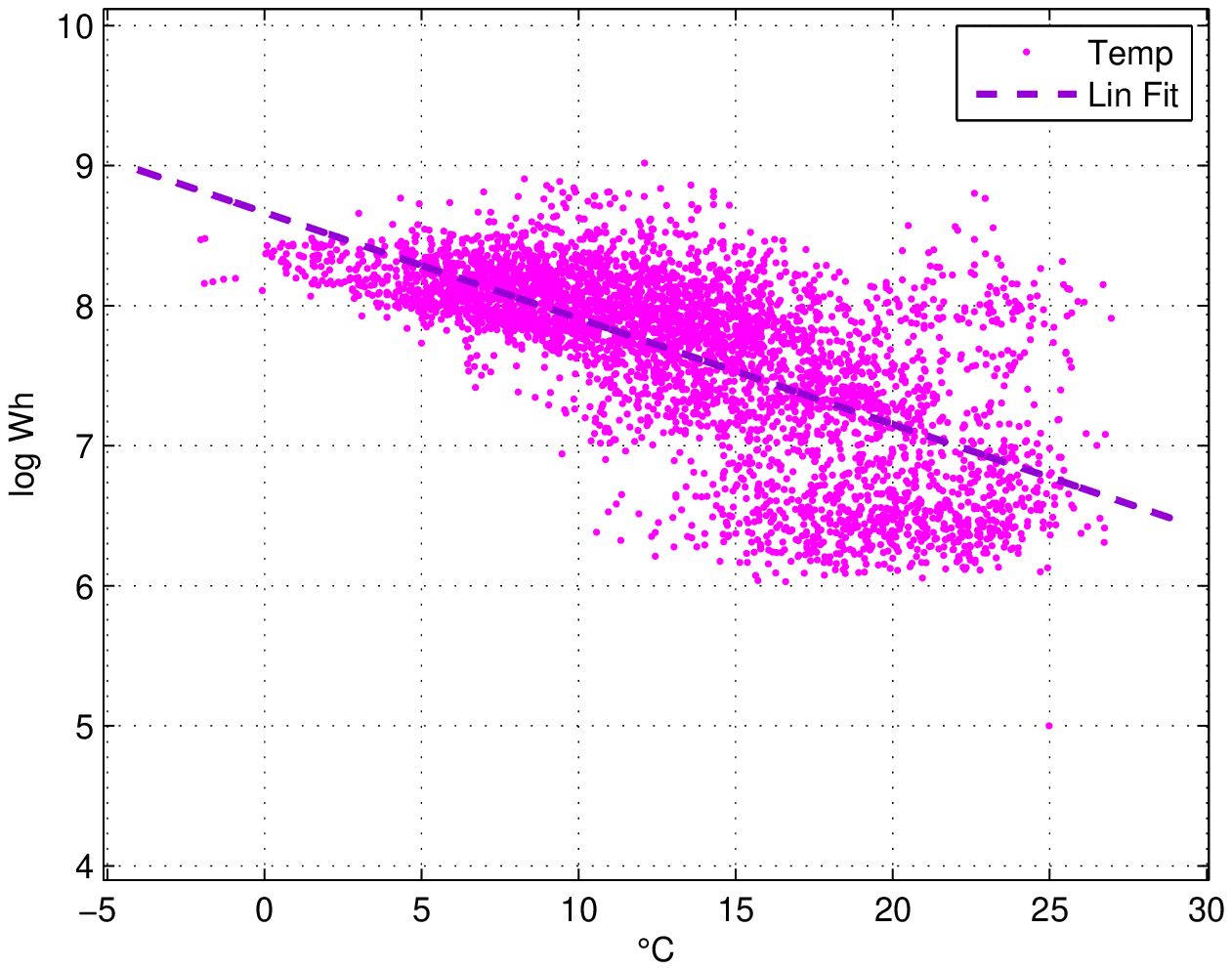} \includegraphics[width=7.2cm]{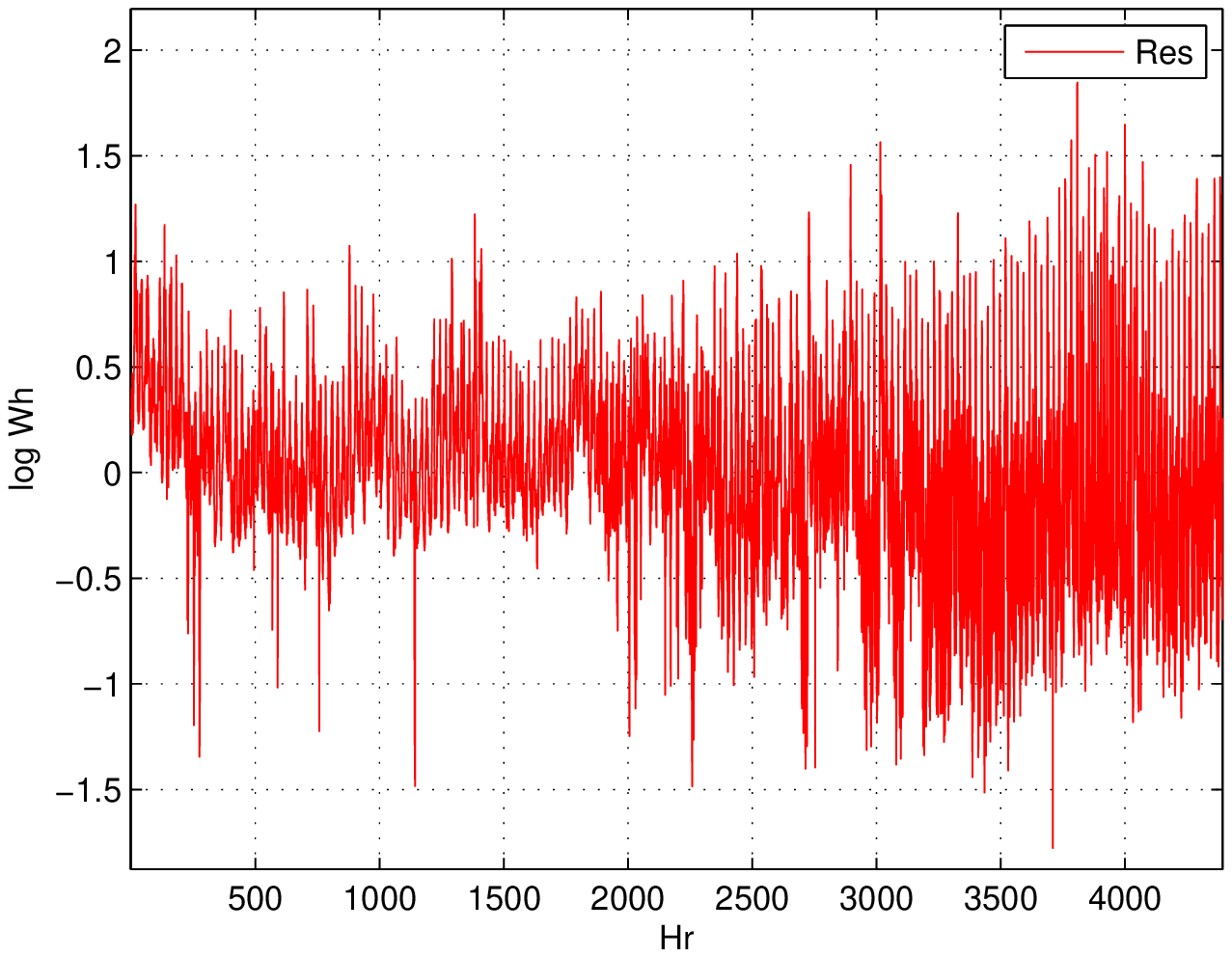}
\vspace{-0.5cm} \\
\caption{Scatter plot between energy consumption and temperature (left), residuals from the linear regression (right).}
\label{Fig_ConsoTemp}
\end{figure}

\begin{rem}
In all the sequel, $B$ stands for the backshift operator which operates on an element of a given time series to produce the previous element, $B X_{t} = X_{t-1}$. The backshift operator is raised to arbitrary integer powers so that $B^{s} X_{t} = X_{t-s}$. The difference operator $\nabla$, defined as $\nabla X_{t} = (1 - B) X_{t}$, also generalises to arbitrary integer powers so that $\nabla_{\! s} X_{t} = (1 - B^{s}) X_{t}$ and $\nabla^{s} X_{t} = (1 - B)^{s} X_{t}$.
\end{rem}

\bigskip


\section{ON A SARIMAX COUPLED MODELLING}


Let us start by recalling some usual tools related to time series analysis that we shall make repeatedly use of throughout the study. The reader will find more details on these results in Chapters 1 and 3 of \cite{BrockwellDavis91}.
\begin{defi}[Stationarity]
A time series $(Y_{t})$ is said to be weakly stationary if, for all $t \in \dZ$,  $\dE[Y_{t}^2] < \infty$, $\dE[Y_{t}] = m$ and, for all $s,t \in \dZ$, $\dE[Y_{t} Y_{s}] = \dE[Y_{t-s}  Y_{0}]$.
\end{defi}
\noindent In all the sequel, the term \textit{stationarity} will always refer to the weak stationarity. Let us now focus on the stationary ARMA process and on the concept of causality.
\begin{defi}[ARMA]
Let $(Y_{t})$ be a stationary time series with zero mean. It is said to be an ARMA$(p,q)$ process if, for every $t \in \dZ$,
\begin{equation}
\label{Arma}
Y_{t} - \sum_{k=1}^{p} a_{k} Y_{t-k} = \veps_{t} + \sum_{k=1}^{q} b_{k} \veps_{t-k}
\end{equation}
where $(\veps_{t})$ is a white noise of variance $\sigma^2 > 0$ and the parameters $a \in \dR^{p}$ and $b \in \dR^{q}$.
\end{defi}
\noindent The equation \eqref{Arma} can be rewritten in the more compact form
\begin{equation*}
\cA(B) Y_{t} = \cB(B) \veps_{t}
\end{equation*}
where the polynomials
\begin{equation*}
\cA(z) = 1 - a_{1} z - \hdots - a_{p} z^{p} \hspace{0.5cm} \text{and} \hspace{0.5cm} \cB(z) = 1 + b_{1} z + \hdots + b_{q} z^{q}.
\end{equation*}
In the particular case where $p=0$, $\cA(z) = 1$ and $(Y_{t})$ is a moving average MA$(q)$ process. Likewise, if $q=0$, $\cB(z) = 1$ and $(Y_{t})$ is an autoregressive AR$(p)$ process.
\begin{defi}[Causality]
Let $(Y_{t})$ be an ARMA$(p,q)$ process for which the polynomials $\cA$ and $\cB$ have no common zeroes. Then, $(Y_{t})$ is causal if and only if $\cA(z) \neq 0$ for all $z \in \dC$ such that $\vert z \vert \leq 1$.
\end{defi}
\noindent The causality enables to write the ARMA$(p,q)$ process as an MA$(\infty$) one. It guarantees the existence of an unique stationary solution for the ARMA$(p,q)$ process expressed as a linear process associated with $(\veps_{t})$, by virtue of the following result.
\begin{prop}
\label{Prop_StatARMA}
If $\cA(z) \neq 0$ for all $z \in \dC$ such that $\vert z \vert \leq 1$, then the ARMA equation $\cA(B) Y_{t} = \cB(B) \veps_{t}$ has the unique stationary solution
\begin{equation}
\label{MA_inf}
Y_{t} = \sum_{k=0}^{\infty} \psi_{k} \veps_{t-k},
\end{equation}
and the coefficients $(\psi_{k})_{k \in \dN}$ are determined by the relation
\begin{equation*}
\cA^{-1}(z) \cB(z) = \sum_{k=0}^{\infty} \psi_{k} z^{k} \hspace{1cm} \text{with} \hspace{1cm} \sum_{k=0}^{\infty} \vert \, \psi_{k} \vert < \infty.
\end{equation*}
\end{prop}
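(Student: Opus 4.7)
The plan is to exploit the analyticity of $1/\cA(z)$ in a neighbourhood of the closed unit disk to produce an absolutely summable sequence $(\psi_{k})$, and then to define and verify the stationary solution as a mean-square convergent linear functional of the noise. The existence half of the statement will essentially follow once the $(\psi_{k})$ are shown to lie in $\ell^{1}(\dN)$, and uniqueness will follow by inverting the AR filter.

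First I would observe that $\cA$ is a polynomial which does not vanish on $\{\vert z \vert \leq 1\}$; by continuity and the compactness of this disk there exists $\rho > 1$ with $\cA(z) \neq 0$ on $\{\vert z \vert \leq \rho\}$. Hence $1/\cA$ is holomorphic on an open disk of radius strictly greater than one and expands as a convergent power series $1/\cA(z) = \sum_{k \geq 0} \alpha_{k} z^{k}$. Because the radius of convergence is strictly larger than one, the coefficients $(\alpha_{k})$ decay geometrically, so in particular $\sum_{k} \vert \alpha_{k} \vert < \infty$. Multiplying by the polynomial $\cB(z)$ yields $\cA^{-1}(z) \cB(z) = \sum_{k \geq 0} \psi_{k} z^{k}$ with $(\psi_{k})$ still absolutely summable.

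Next I would define $Y_{t} := \sum_{k \geq 0} \psi_{k} \veps_{t-k}$. Since $\sum_{k} \vert \psi_{k} \vert < \infty$ and $(\veps_{t})$ is a white noise with variance $\sigma^{2}$, the partial sums form a Cauchy sequence in $L^{2}$, so $Y_{t}$ is well defined both in $L^{2}$ and almost surely. Stationarity and the zero-mean property follow directly from the fact that $Y_{t}$ is a fixed linear functional of a translation-invariant noise. To check that $(Y_{t})$ satisfies the ARMA equation, I would apply $\cA(B)$ termwise to $Y_{t}$, justify the interchange of the finite sum $\cA(B)$ with the infinite one defining $Y_{t}$ by absolute convergence in $L^{2}$, and then use the formal-series identity $\cA(z) \sum_{k} \psi_{k} z^{k} = \cB(z)$ to collapse the result to $\cB(B) \veps_{t}$. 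For uniqueness, if $(Y_{t}')$ is another stationary solution, I would apply the inverse filter $\sum_{k} \alpha_{k} B^{k}$, which is a bounded linear operator on stationary $L^{2}$ sequences because $\sum_{k} \vert \alpha_{k} \vert < \infty$, to both sides of $\cA(B) Y_{t}' = \cB(B) \veps_{t}$, obtaining $Y_{t}' = \sum_{k} \psi_{k} \veps_{t-k} = Y_{t}$ almost surely.

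The main technical point, and the step where I would be most careful, is the rigorous interchange of the operator $\cA(B)$ (and later $\sum_{k} \alpha_{k} B^{k}$) with the infinite sum defining the candidate solution. This relies on the $L^{2}$ continuity of the backshift together with the absolute summability of the filter coefficients, and it is precisely in verifying these interchanges that the complex-analytic preparation of the first step bears fruit.
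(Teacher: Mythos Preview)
Your proof sketch is correct and is essentially the standard argument: analyticity of $1/\cA$ on a disk of radius $\rho>1$ gives geometrically decaying $(\alpha_k)$, hence absolutely summable $(\psi_k)$; the linear process $\sum_k \psi_k \veps_{t-k}$ converges in $L^2$, is stationary, and satisfies the ARMA equation by the formal identity $\cA(z)\sum_k \psi_k z^k = \cB(z)$; uniqueness follows by applying the $\ell^1$ inverse filter. The paper does not actually prove this proposition---it simply refers the reader to Theorem~3.1.1 of Brockwell and Davis \cite{BrockwellDavis91}---and what you have written is precisely the argument given there, so you are in full agreement with the intended proof.
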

\begin{proof}
We refer the reader to Theorem 3.1.1 of \cite{BrockwellDavis91} for a detailed version of Proposition \ref{Prop_StatARMA}, and the proof that follows.
\end{proof}
\noindent One can observe that the stationarity of the solution and the causality of the ARMA process will often coincide on the real curves defined on the positive integers that will be considered in the following. Indeed, a zero inside the unit circle results in an explosive behavior of the process that cannot match with stationarity on $\dN^{\,*}$. We now focus our attention on some practical tools related to identification methods for the orders of stationary AR$(p)$ and MA$(q)$ processes.
\begin{defi}[ACF] Let $(Y_{t})$ be a stationary time series. The autocorrelation function $\rho$ associated with $(Y_{t})$ is defined, for all $t \in \dZ$, as
\begin{equation*}
\rho(t) = \frac{\gamma(t)}{\gamma(0)}
\end{equation*}
where the autocovariance function $\gamma(t) = \dE[Y_{t} Y_0] - \dE[Y_{t}] \dE[Y_0]$.
\end{defi}
\begin{prop}
\label{Prop_MA}
The stationary time series $(Y_{t})$ with zero mean is a MA$(q)$ process such that $b_{q} \neq 0$ if and only if $\rho(q) \neq 0$ and $\rho(t) = 0$ for all $\vert t \vert > q$.
\end{prop}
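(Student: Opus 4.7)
The plan is to prove each direction separately, the forward one being a direct computation and the converse requiring a spectral factorisation, which is where the real difficulty lies.

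\textbf{Forward direction.} Assume $(Y_{t})$ is a MA$(q)$ process, i.e.\ $Y_{t} = \veps_{t} + b_{1} \veps_{t-1} + \cdots + b_{q} \veps_{t-q}$ with $b_{q} \neq 0$ and $(\veps_{t})$ a white noise of variance $\sigma^{2}$. Setting $b_{0} = 1$ and using the orthogonality $\dE[\veps_{s} \veps_{t}] = \sigma^{2}$ if $s = t$ and $0$ otherwise, a direct expansion of $\dE[Y_{t} Y_{t-h}]$ yields
\begin{equation*}
\gamma(h) \,=\, \sigma^{2} \sum_{k=0}^{q - |h|} b_{k} \, b_{k + |h|}
\end{equation*}
for $0 \leq |h| \leq q$, and $\gamma(h) = 0$ for $|h| > q$. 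In particular $\gamma(q) = \sigma^{2} b_{q} \neq 0$, so that $\rho(q) \neq 0$ and $\rho(h) = 0$ for all $|h| > q$.

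\textbf{Converse direction.} Suppose now that $(Y_{t})$ is stationary with zero mean and that its autocovariance $\gamma$ vanishes for $|h| > q$ while $\gamma(q) \neq 0$. I would introduce the spectral density
\begin{equation*}
f(\lambda) \,=\, \frac{1}{2\pi} \sum_{h = -q}^{q} \gamma(h) \, e^{-i h \lambda},
\end{equation*}
which is a non-negative trigonometric polynomial of exact degree $q$ (non-negativity comes from the positive semi-definiteness of $\gamma$, and the degree is forced by $\gamma(q) \neq 0$). By the Fej\'er--Riesz lemma, there exists $\sigma^{2} > 0$ and a polynomial $\cB(z) = 1 + b_{1} z + \cdots + b_{q} z^{q}$, $b_{q} \neq 0$, whose roots may all be chosen outside the open unit disc, such that $2 \pi f(\lambda) = \sigma^{2} \vert \cB(e^{-i\lambda}) \vert^{2}$.

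\textbf{Construction of the innovations.} With $\cB$ invertible on the closed unit disc, expand $\cB(z)^{-1} = \sum_{k \geq 0} c_{k} z^{k}$ with $\sum |c_{k}| < \infty$, and define
\begin{equation*}
\veps_{t} \,:=\, \cB^{-1}(B) \, Y_{t} \,=\, \sum_{k \geq 0} c_{k} \, Y_{t-k},
\end{equation*}
the series converging in mean square by summability of $(c_{k})$ and boundedness of $\gamma$. A spectral-domain computation shows that $(\veps_{t})$ has constant spectral density $\sigma^{2}/(2\pi)$, i.e.\ it is a white noise of variance $\sigma^{2}$. Applying $\cB(B)$ to both sides returns $Y_{t} = \cB(B) \veps_{t}$, which is the desired MA$(q)$ representation with $b_{q} \neq 0$. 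The main obstacle is the Fej\'er--Riesz factorisation step: one must simultaneously secure non-negativity of $f$ (automatic from positive-definiteness), control the degree of the factor (which is preserved precisely because $\gamma(q) \neq 0$, preventing a degree drop that would force $b_{q} = 0$), and choose the factor with roots outside the unit disc to ensure that $\cB^{-1}(B)$ makes sense as a bona fide linear filter yielding the white noise $(\veps_{t})$.
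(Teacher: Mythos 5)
Your forward direction is correct, and the overall plan for the converse (pass to the spectral density and factor it by Fej\'er--Riesz) is a legitimate alternative to the route the paper takes, which is simply to invoke Chapter 3 of Brockwell and Davis, where the converse is proved by Hilbert-space projections: the innovations $Z_t = Y_t - \mathrm{P}_{\bar{\mathrm{sp}}\{Y_s,\ s \le t-1\}} Y_t$ are shown to be a white noise and $Y_t$ a linear combination of $Z_t, \dots, Z_{t-q}$ using $\gamma(h)=0$ for $\vert h\vert >q$. However, your construction of the innovations has a genuine gap in the boundary case. Fej\'er--Riesz only lets you place the roots of $\cB$ in the closed region $\{\vert z\vert \ge 1\}$: if the spectral density $f$ vanishes at some frequency, every factorisation must have a root \emph{on} the unit circle (a zero of $f\ge 0$ on the circle has even multiplicity and each factor inherits half of it), so $\cB$ cannot be taken invertible on the closed disc. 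Concretely, $\gamma(0)=2\sigma^2$, $\gamma(\pm 1)=-\sigma^2$, $\gamma(h)=0$ otherwise satisfies the hypotheses with $q=1$, yet $2\pi f(\lambda)=\sigma^2\vert 1-e^{-i\lambda}\vert^2$ forces $\cB(z)=1-z$; the coefficients of $\cB^{-1}$ are all equal to $1$, the series $\sum_k c_k Y_{t-k}$ does not converge in mean square, and your definition of $\veps_t$ collapses. Since the proposition does not exclude such processes (this one is still an MA$(1)$, namely $Y_t=\veps_t-\veps_{t-1}$), the converse as written is incomplete precisely where you located ``the real difficulty.''

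The gap is repairable without abandoning your strategy: instead of inverting $\cB$ in the time domain, use the spectral representation $Y_t=\int_{(-\pi,\pi]} e^{it\lambda}\, dZ(\lambda)$ and define $\veps_t=\int_{(-\pi,\pi]} e^{it\lambda}\, \cB(e^{-i\lambda})^{-1}\, dZ(\lambda)$. The integrand lies in $L^2(f\, d\lambda)$ because $\vert \cB(e^{-i\lambda})\vert^{-2} f(\lambda)=\sigma^2/(2\pi)$ is bounded, so $(\veps_t)$ is a well-defined white noise of variance $\sigma^2$ even when $\cB$ has unit-circle roots, and applying $\cB(B)$ gives back $Y_t=\cB(B)\veps_t$ with $b_q\neq 0$. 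Alternatively, follow the projection/innovations argument of the reference the paper cites, which avoids spectral factorisation altogether.
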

\noindent Denote by $(\phi_{t,\, j})$, for all $t \in \dN^{\,*}$ and $j \in \{ 1, \hdots, t \}$, the sequence given by $\phi_{1,1} = \rho(1)$ and, for all $t \geq 2$, by the Levinson-Durbin recursion,
\begin{equation*}
\phi_{t,\, t} = \left( 1 - \sum_{k=1}^{t-1} \phi_{t-1,\, k}\, \rho(k) \right)^{\hspace{-0.15cm} -1} \left( \rho(t) - \sum_{k=1}^{t-1} \phi_{t-1,\, k}\, \rho(t-k) \right)
\end{equation*}
where, for all $k \in \{ 1, \hdots, t-1 \}$, $\phi_{t,\, k} = \phi_{t-1,\, k} - \phi_{t,\, t} \phi_{t-1,\, t-k}$. The sequence $(\phi_{t,\, t})_{t \in \dN^{\,*}}$ may be seen as the correlation between two residuals obtained after regressing $Y_{t+1}$ and $Y_1$ on the intermediate observations $Y_2, \hdots, Y_{t}$. Formally, for all $t \in \dN^{\,*}$, 
\begin{equation*}
\phi_{t,\, t} = \text{Corr}(Y_{t+1} - \text{P}_{\bar{\text{sp}} \{Y_2,\, \hdots,\, Y_{t}\}} Y_{t+1}, Y_1 - \text{P}_{\bar{\text{sp}} \{Y_2,\, \hdots,\, Y_{t}\}} Y_1)
\end{equation*}
where $\text{P}_{\bar{\text{sp}} \{Y_2,\, \hdots,\, Y_{t}\}}$ stands for the $L^2-$orthogonal projection operator of any square-integrable random variable on the closed subspace generated by $Y_2,\, \hdots,\, Y_{t}$.
\begin{defi}[PACF] Let $(Y_{t})$ be a stationary time series with zero mean. The partial autocorrelation function $\alpha$ is defined as $\alpha(0) = 1$, and, for all $t \in \dN^{\,*}$, as
\begin{equation*}
\alpha(t) = \phi_{t,\, t}.
\end{equation*}
\end{defi}
\begin{prop}
\label{Prop_AR}
If there exists a sequence $(\psi_{k}) \in \ell^2(\dN)$ such that $(Y_{t})$ has the unique expression given by \eqref{MA_inf} with $\psi_0=1$, then the stationary time series $(Y_{t})$ with zero mean is an AR$(p)$ process such that $a_{p} \neq 0$ if and only if $\alpha(p) \neq 0$ and $\alpha(t) = 0$ for all $t > p$.
\end{prop}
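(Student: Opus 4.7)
The plan is to exploit the Hilbert-space characterisation of $\alpha(t) = \phi_{t,\, t}$: by the Levinson--Durbin recursion, $\sum_{k=1}^{t} \phi_{t,\, k}\, Y_{t+1-k}$ is precisely the $L^2$-orthogonal projection of $Y_{t+1}$ onto $\bar{\text{sp}}\{Y_1, \hdots, Y_t\}$, so $\alpha(t)$ is simply the coefficient of $Y_1$ in this best linear predictor. Both implications of the proposition will then be read off the shape of this predictor.

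For the direct implication, I would assume $(Y_t)$ is AR$(p)$ with $a_p \neq 0$. The MA$(\infty)$ representation with $\psi_0 = 1$ guarantees $\veps_{t+1} \perp \bar{\text{sp}}\{Y_s : s \leq t\}$, so for every $t \geq p$ the identity $Y_{t+1} = \sum_{k=1}^{p} a_k Y_{t+1-k} + \veps_{t+1}$ already exhibits the orthogonal decomposition of $Y_{t+1}$ along $\bar{\text{sp}}\{Y_1, \hdots, Y_t\}$ and its orthogonal complement. Uniqueness of the projection then yields $\phi_{t,\, k} = a_k$ for $1 \leq k \leq p$ and $\phi_{t,\, k} = 0$ for $p < k \leq t$, whence $\alpha(p) = a_p \neq 0$ and $\alpha(t) = 0$ for every $t > p$.

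For the converse, suppose $\alpha(p) \neq 0$ and $\alpha(t) = 0$ for all $t > p$. Plugging $\phi_{t,\, t} = 0$ into the update rule $\phi_{t,\, k} = \phi_{t-1,\, k} - \phi_{t,\, t}\, \phi_{t-1,\, t-k}$ freezes the coefficients: for every $t \geq p$ one has $\phi_{t,\, k} = a_k := \phi_{p,\, k}$ for $1 \leq k \leq p$ (with $a_p = \alpha(p) \neq 0$) and $\phi_{t,\, k} = 0$ for $p < k \leq t$. Setting $\tilde\veps_t := Y_t - \sum_{k=1}^p a_k Y_{t-k}$, this quantity is by construction a projection residual, hence orthogonal to $Y_u$ for every $u \leq t-1$, the estimation horizon being pushed back to $-\infty$ by translating the window via stationarity. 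For $s < t$, $\tilde\veps_s$ lies in $\text{sp}\{Y_{s-p}, \hdots, Y_s\}$, a subspace on which $\tilde\veps_t$ is orthogonal, so the family $(\tilde\veps_t)$ is pairwise orthogonal with constant variance by stationarity. Finally, writing $Y_t = \veps_t + \sum_{k\geq 1} \psi_k \veps_{t-k}$ with $\psi_0 = 1$, the orthogonality $\veps_t \perp \bar{\text{sp}}\{Y_s : s \leq t-1\}$ yields a variance decomposition giving $\text{Var}(\tilde\veps_t) \geq \sigma^2 > 0$, so $(\tilde\veps_t)$ is a genuine white noise and $(Y_t)$ solves the AR$(p)$ equation with $a_p \neq 0$.

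The delicate step is the converse, and within it the non-degeneracy of the residual: orthogonality of the $\tilde\veps_t$ is essentially automatic from the Levinson--Durbin recursion, but ruling out a variance collapse --- which would quietly reduce the putative AR$(p)$ to a lower-order process --- genuinely requires the normalisation $\psi_0 = 1$ in the MA$(\infty)$ expansion.
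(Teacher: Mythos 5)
Your argument is correct, and it coincides with the paper's own treatment in the only sense available: the paper gives no proof of Proposition \ref{Prop_AR}, deferring to Chapter 3 of Brockwell and Davis, and your projection/Durbin--Levinson argument --- reading $\alpha(t)=\phi_{t,\,t}$ as the last coefficient of the best linear predictor, freezing the coefficients via the recursion when $\alpha(t)=0$, and using $\psi_0=1$ to get $\mathrm{Var}(\tilde{\veps}_t)\geq\sigma^2>0$ so the residuals form a genuine white noise --- is essentially the standard proof found there. The only detail left implicit is that the hypothesis ($(\psi_k)\in\ell^2(\dN)$, $\psi_0=1$, $\sigma^2>0$) forces $\gamma(0)>0$ and $\gamma(h)\to 0$, hence nonsingular covariance matrices, which is what justifies identifying the Durbin--Levinson coefficients $\phi_{t,\,k}$ with the unique coefficients of the $L^2$-projection in both directions of your argument.
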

\begin{proof}
The proofs of Propositions \ref{Prop_MA} and \ref{Prop_AR} may be found in Chapter 3 of \cite{BrockwellDavis91}.
\end{proof}
\noindent These identification techniques will be useful thereafter for orders selection in models building. The hypothesis of stationarity will be tested \textit{via} the commonly used Kwiatkowski-Phillips-Schmidt-Shin \textit{KPSS test} \cite{KwiatkowskiPhillipsSchmidtShin92} together with the unit root Augmented Dickey-Fuller \textit{ADF test} \cite{DickeySaid84}. As for the hypothesis of white noise, it will be evaluated through the \textit{portmanteau test} of Ljung-Box \cite{LjungBox78}, \cite{BoxPierce70}.

\medskip

In all the sequel, we denote by $(C_{t})$ the individual energy consumption of a given customer, for all $1 \leq t \leq T$. We also denote by $(U_{t})$ the temperature associated with $(C_{t})$, supposed to be known for all $-r+2 \leq t \leq T+H$ where $H$ is the prediction horizon. In addition, we will use a variance-stabilizing Box-Cox logarithmic transformation $(Y_{t})$ ensuring homoskedasticity, given, for all $1 \leq t \leq T$, by
\begin{equation}
\label{Y}
Y_{t} = \log \left( C_{t} + \mathrm{e}^{\mu} \right)
\end{equation}
where $\mu$ is a positive parameter to evaluate, implying that $Y_{t}=\mu$ in the particular case where $C_{t}=0$. This safety precaution is justified by the possible use of relative criteria, such as the \textit{Mean Absolute Percentage Error}.

\medskip

\noindent{\bf The dynamic coupled modelling.} The first step of the modelling relies in a suitable way to remove the direct influence of the temperature on the consumption. As mentioned above, it exists a strong correlation between $(Y_{t})$ and $(U_{t})$. This relationship is modeled through the linear regression given, for all $1 \leq t \leq T$, by
\begin{equation}
\label{ModLR}
Y_{t} = c_0 + \cC(B) U_{t} + \veps_{t}
\end{equation}
where $c_0 \in \dR$ is an intercept, $\cC(B)$ is a polynomial of order $r$ such that, for all $z \in \dC$,
\begin{equation*}
\cC(z) = \sum_{k=1}^{r} c_{k} z^{k-1}
\end{equation*}
and the unknown vector parameter $c \in \dR^{\, r+1}$ is estimated by standard least squares. The disturbance terms $(\veps_{t})$ will be regarded as a seasonal time series. In particular, $(\veps_{t})$ is said to follow a SARIMA$(p,d,q) \times (P,D,Q)_{s}$ modelling if, for all $1 \leq t \leq T$,
\begin{equation}
\label{ModTS}
(1 - B)^{d} (1 - B^{s})^{D} \cA(B) \cAS(B) \veps_{t} = \cB(B) \cBS(B) V_{t},
\end{equation}
according to Definition 9.6.1 of \cite{BrockwellDavis91}, where $(V_{t})$ is a white noise of variance $\sigma^2 > 0$, and where the polynomials are defined, for all $z \in \dC$, as
\begin{equation*}
\cA(z) = 1 - \sum_{k=1}^{p} a_{k} z^{k}, \hspace{1cm} \cAS(z) = 1 - \sum_{k=1}^{P} \alpha_{k} z^{s k},
\end{equation*}
\begin{equation*}
\cB(z) = 1 - \sum_{k=1}^{q} b_{k} z^{k}, \hspace{1cm} \cBS(z) = 1 - \sum_{k=1}^{Q} \beta_{k} z^{s k}.
\end{equation*}
In this modelling, $a \in \dR^{p}$, $b \in \dR^{q}$, $\alpha \in \dR^{P}$ and $\beta \in \dR^{Q}$ are vector parameters estimated by generalized least squares.
\noindent The differenced process $(\nabla^{d} \nabla_{\! s}^{D} \veps_{t})$ in \eqref{ModTS} is a stationary solution of the ARMA causal process, \textit{i.e.} $\cA(z) \neq 0$ and $\cAS(z) \neq 0$ for all $z \in \dC$ such that $\vert z \vert \leq 1$.
\begin{defi}[SARIMAX]
In the particular framework of the study, a random process $(Y_{t})$ will be said to follow a SARIMAX$(p,d,q,r) \times (P,D,Q)_{s}$ coupled modelling if, for all $1 \leq t \leq T$, it satisfies
\begin{equation}
\label{ModCou}
\vspace{1ex}
\left\{
\begin{array}[c]{l}
Y_{t} = c_0 + \cC(B) U_{t} + \veps_{t}, \vspace{0.2cm} \\
(1 - B)^{d} (1 - B^{s})^{D} \cA(B) \cAS(B) \veps_{t} = \cB(B) \cBS(B) V_{t}.
\end{array}
\right.
\end{equation}
\end{defi}
\noindent The orders $p$, $d$, $q$, $r$, $P$, $D$, $Q$ and $s$ shall be evaluated following a well-known Box and Jenkins methodology \cite{BoxJenkinsReinsel76}. Moreover, a straightforward calculation shows that \eqref{ModCou} can be rewritten in the condensed form given, for all $1 \leq t \leq T$, by
\begin{equation}
\label{ModGen}
(1 - B)^{d} (1 - B^{s})^{D} \cA(B) \cAS(B) \left( Y_{t} - \cC(B) U_{t} \right)  = \cB(B) \cBS(B) V_{t},
\end{equation}
as soon as $d+D > 0$, which will be an assumption always verified as we shall explain in the next section. Indeed, $c_0$ vanishes by a single differentiation of $(\veps_{t})$. In light of foregoing, one can establish the following result, denoting by $I$ the identity matrix of order $T$, $Y$ and $U$ the observation vector of order $T$ and the design matrix of order $T \times (r+1)$, respectively given by
\begin{equation*}
Y = \begin{pmatrix}
Y_1 \\
Y_2 \\
\vdots \\
Y_{T}
\end{pmatrix} \hspace{0.5cm} \text{and} \hspace{0.5cm}
U = \begin{pmatrix}
1 & U_{T} & U_{T-1} & \hdots & U_{T-r+1} \\
1 & U_{T-1} & U_{T-2} & \hdots & U_{T-r} \\
\vdots & \vdots & \vdots & & \vdots \\
1 & U_{1} & U_{0} & \hdots & U_{-r+2} 
\end{pmatrix}.
\end{equation*}
\begin{thm}
\label{Thm_ExistSol}
Assume that $U^{\, \prime} U$ is invertible. Then, the differenced process $(\nabla^{d} \nabla_{\! s}^{D} \veps_{t})$ where $\veps_{t}$ is given, for all $1 \leq t \leq T$, by the vector form
\begin{equation}
\label{ResTh}
\veps = \left( I - U (U^{\, \prime} U)^{-1} U^{\, \prime} \right) Y
\end{equation}
is a stationary solution of the coupled model \eqref{ModCou}.
\end{thm}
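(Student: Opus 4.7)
My plan is to combine two ingredients: an elementary linear-algebra identification of the residual vector as an OLS projection, together with the ARMA stationarity result already established in Proposition \ref{Prop_StatARMA}. First I would observe that, under the standing assumption that $U^{\, \prime} U$ is invertible, the formula \eqref{ResTh} is nothing but the vector of ordinary least squares residuals attached to the regression \eqref{ModLR}. Setting $\widehat{c} = (U^{\, \prime} U)^{-1} U^{\, \prime} Y$, the decomposition $Y = U \widehat{c} + \veps$ holds by construction, and reading its $t$-th component yields
\[
Y_{t} \,=\, \widehat{c}_{0} + \sum_{k=1}^{r} \widehat{c}_{k}\, U_{t-k+1} + \veps_{t} \,=\, \widehat{c}_{0} + \cC(B) U_{t} + \veps_{t},
\]
so that the first equation of the coupled model \eqref{ModCou} is satisfied with $c = \widehat{c}$.

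To handle the second equation, I would then apply the differencing operator $\nabla^{d} \nabla_{\! s}^{D}$ to $\veps_{t}$, following the reduction \eqref{ModGen} which precisely serves to absorb the intercept $c_{0}$. Writing $W_{t} = \nabla^{d} \nabla_{\! s}^{D} \veps_{t}$, the second line of \eqref{ModCou} collapses to the purely ARMA equation
\[
\cA(B)\, \cAS(B)\, W_{t} \,=\, \cB(B)\, \cBS(B)\, V_{t}.
\]
Under the causality hypothesis imposed on the modelling, namely $\cA(z) \neq 0$ and $\cAS(z) \neq 0$ for every $z \in \dC$ with $\vert z \vert \leq 1$, the product polynomial $\cA\, \cAS$ has no root in the closed unit disk either. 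Proposition \ref{Prop_StatARMA} then provides a unique stationary solution expressible as a convergent MA$(\infty)$, $W_{t} = \sum_{k \geq 0} \psi_{k}\, V_{t-k}$, with $\sum_{k \geq 0} \vert \psi_{k} \vert < \infty$; this is exactly the stationarity assertion claimed for the differenced process.

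Overall, the argument is essentially a bookkeeping combination of two standard facts: the explicit form of the OLS residual projector $I - U (U^{\, \prime} U)^{-1} U^{\, \prime}$ and the ARMA causality criterion. The only point requiring a bit of care, which I would flag explicitly, is the compatibility between the two lines of \eqref{ModCou}: it is ensured by the annihilation of $c_{0}$ under either $\nabla$ or $\nabla_{\! s}$ as soon as $d + D > 0$, a remark already made in the text preceding \eqref{ModGen}. I do not anticipate any substantial obstacle beyond making these identifications transparent.
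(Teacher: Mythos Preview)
Your proposal is correct and follows exactly the approach of the paper: the paper's proof is a single sentence stating that the theorem is ``a direct consequence of Proposition \ref{Prop_StatARMA} together with a straightforward least squares calculation,'' and you have simply made both of these ingredients explicit. Your additional remark about the annihilation of $c_{0}$ under differencing when $d+D>0$ is also consistent with the discussion surrounding \eqref{ModGen}.
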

\begin{proof}
Theorem \ref{Thm_ExistSol} is a direct consequence of Proposition \ref{Prop_StatARMA} together with a straightforward least squares calculation.
\end{proof}
\begin{rem}
In the particular case where $r=0$, we merely obtain $\veps = Y - \bar{Y}$ where
\begin{equation*}
\bar{Y} = \frac{1}{T} \sum_{k=1}^{T} Y_{k},
\end{equation*}
and \eqref{ModCou} reduces to the usual SARIMA$(p,d,q) \times (P,D,Q)_{s}$ modelling on the recentered load curve. In addition, as soon as $d+D>0$, the influence of $\bar{Y}$ vanishes.
\end{rem}

\noindent The $t-$statistic associated with each parameter, exploiting the asymptotic normality of the estimates, will provide a significance testing procedure, as a confirmation of the criteria minimization strategy. Though, they will not be appropriate in the exogenous regression owing to the strong autocorrelation in the residuals, and will only be applied to the time series coefficients.

\medskip

\noindent{\bf Application to forecasting.} Whatever prediction method one wishes to apply, see \textit{e.g.} Chapters 5 and 9 of \cite{BoxJenkinsReinsel76} or Chapter 5 of \cite{BrockwellDavis91}, the time series analysis of \eqref{ModCou} provides the predictor of $(\veps_{t})$ at stage $T+1$, denoted by $\wt{\veps}_{T+1}$. Let $\wh{c}_{T}$ be the least squares estimator of $c$ in \eqref{ModLR} and assume that the order $r$ is known. Then, it follows that
\begin{equation}
\label{Pred_H1}
\wt{Y}_{T+1} = \wh{c}_{0,\,T} + \sum_{k=1}^{r} \wh{c}_{k,\,T} U_{T-k+2} + \wt{\veps}_{T+1}.
\end{equation}
\textit{Via} the same lines, since $(U_{t})$ is supposed to be known for all $-r+2 \leq t \leq T+H$, the predictor at horizon $H$ is given by
\begin{equation}
\label{Pred_HH}
\wt{Y}_{T+H} = \wh{c}_{0,\,T} + \sum_{k=1}^{r} \wh{c}_{k,\,T} U_{T-k+H+1} + \wt{\veps}_{T+H}.
\end{equation}

\bigskip


\section{APPLICATION TO FORECASTING ON A LOAD CURVE}


By virtue of Theorem \ref{Thm_ExistSol}, the application of the coupled model \eqref{ModCou} to real curves merely consists in identifying the seasonality and the orders of differencing ensuring the stationarity of the residual sequence from the regression analysis. Moreover, from a careful analysis of the ACF and PACF, we will get a first approximation of the orders to be considered in the ARMA modelling. We shall first investigate seasonality through a Fourier spectrogram, then stationarity of the deseasonalized series and autocorrelations \textit{via} ACF and PACF, and finally the overall randomness of successive innovations. Different models will be suggested and compared using bayesian criteria on the one hand, and then prediction criteria on the other hand. As mentioned above, the KPSS test \cite{KwiatkowskiPhillipsSchmidtShin92}, the ADF test \cite{DickeySaid84} and the Ljung-Box test \cite{LjungBox78}, \cite{BoxPierce70} will be used as statistical procedures for evaluating the hypothesis of stationarity, of unit root and of white noise up to a certain lag, respectively.

\medskip

From now on, for all $1 \leq t \leq T$, $(C_{t})$ is a load curve and $(Y_{t})$ is the associated logarithmic process, given by \eqref{Y} with $\mu=5$. In addition, $(U_{t})$ is the exogenous temperature supposed to be known for all $-r+2 \leq t \leq T+H$ and $H$ is the prediction horizon. Denote also by $(\wh{\veps}_{t})$ the least squares estimated residual set from the regression analysis accordingly given, for all $1 \leq t \leq T$, by
\begin{equation}
\label{ResEst}
\wh{\veps}_{t} = Y_{t} - \wh{c}_{0,\,T} - \sum_{k=1}^{r} \wh{c}_{k,\,T} U_{t-k+1},
\end{equation}
directly coming from \eqref{ResTh}. For a sake of simplicity, one shall take $r=1$ without loss of generality. Besides, one observes on real curves that numerical results are very similar when $r$ increases. Indeed, due to the natural phenomenon it represents, temperature $U_{t}$ at time $t$ is highly correlated to $U_{t-1}$ and the use of lots of regressors to explain $C_{t}$ in our modelling would often be redundant and generate statistically nonsignificant coefficients.

\medskip

\noindent{\bf Seasonality.} Let us choose for example $T = 730$, that is 2 years of consumption. We consider the $k-$th empirical Fourier coefficient of $(\wh{\veps}_{t})$, given by
\begin{equation*}
\gamma_{k} = \frac{1}{2 \pi T} \left\vert \sum_{t=1}^{T} \wh{\veps}_{t} \, \mathrm{e}^{- \mathrm{i} f_{k} t} \right\vert^2
\end{equation*}
where $f_{k} = 2 \pi k / T$ is the $k-$th Fourier frequency. Figure \ref{Fig_FourRes} displays the variation of $\sqrt{\gamma_{k}}$ on the Fourier frequency spectrum of $(\wh{\veps}_{t})$ on the left-hand side and the ones of $(\nabla_{\! 12} \wh{\veps}_{t})$ and $(\nabla_{\! 24} \wh{\veps}_{t})$ on the right-hand side, with unexploitable low frequencies truncated.
\begin{figure}[h!]
\includegraphics[width=7.2cm]{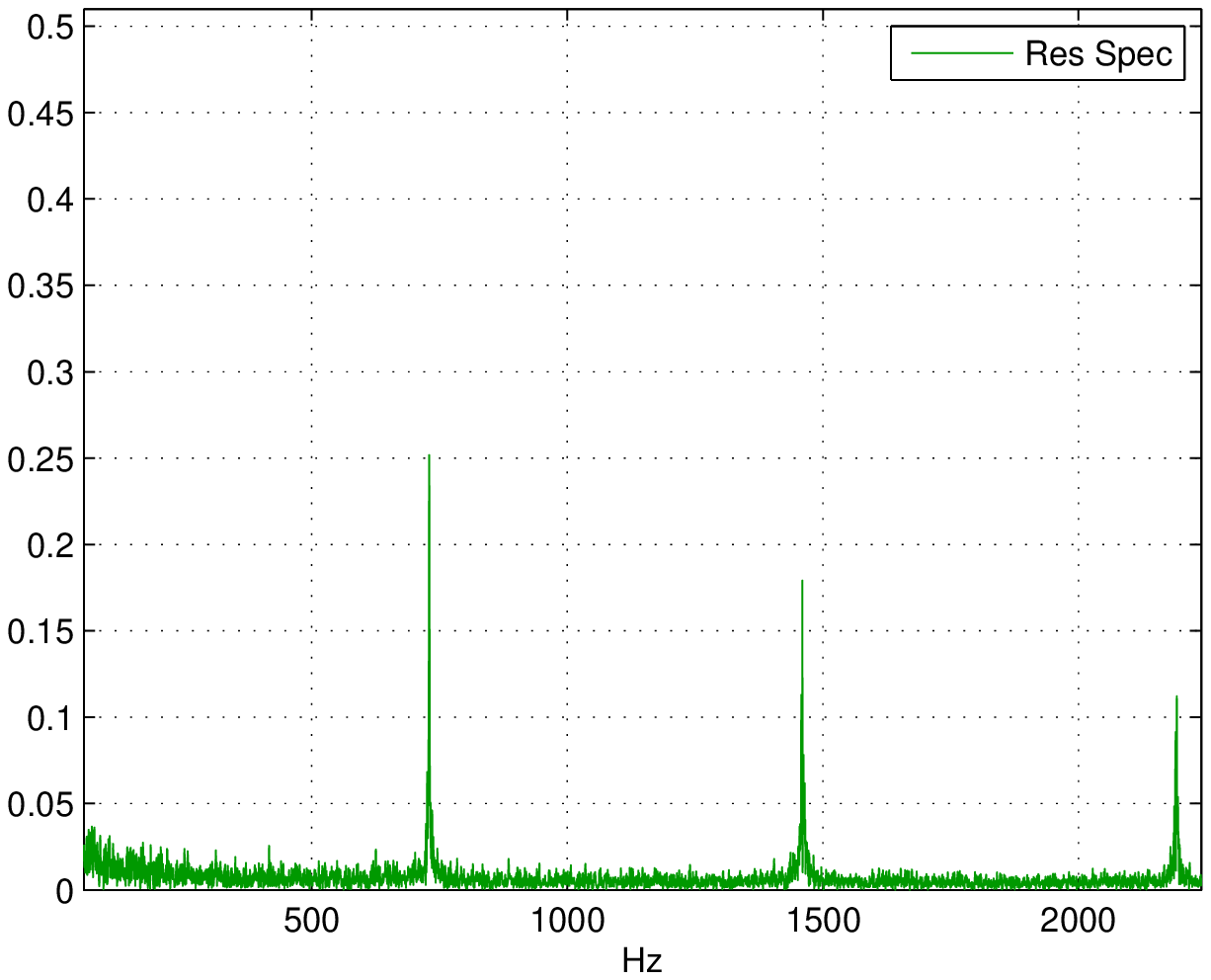} \includegraphics[width=7.2cm]{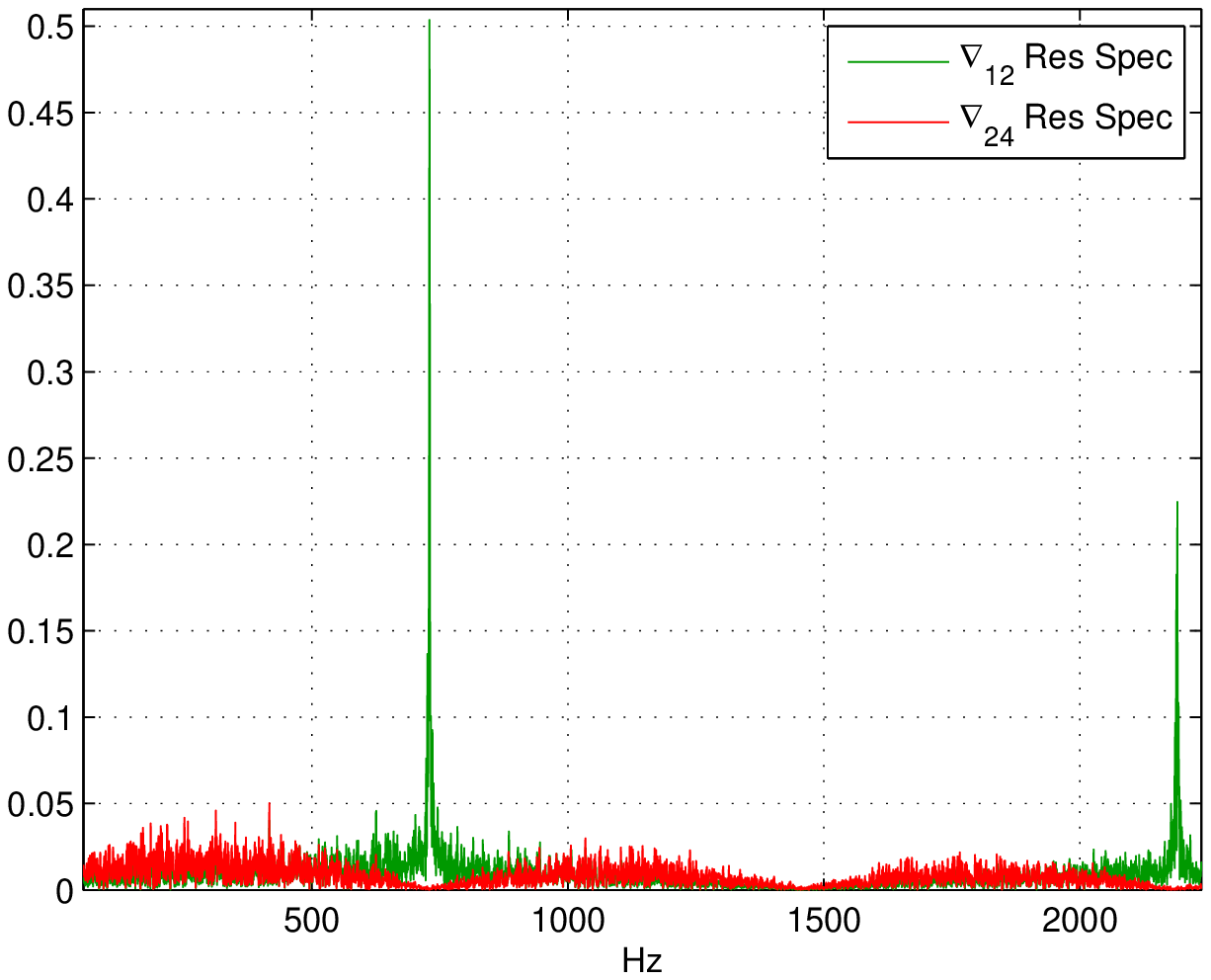}
\caption{Fourier spectrograms of residuals (left) and seasonally differenced residuals of period 12 and 24 (right).}
\label{Fig_FourRes}
\end{figure}

\medskip

\noindent Figure \ref{Fig_FourRes} shows that the estimated residual set $(\wh{\veps}_{t})$ has a seasonality and the abscissa of the main peak indicates that the pattern repeats itself 730 times on 2 years, that is daily. The second peak also suggests a seasonality of 12 hours. On the right side, one can see that $(\nabla_{\! 12} \wh{\veps}_{t})$ still has a periodicity whereas $(\nabla_{\! 24} \wh{\veps}_{t})$ is quasi-aperiodic. This is the reason why one shall choose $s=24$ in the SARIMA modelling, and that also leads us to choose $D=1$ in \eqref{ModCou}.

\medskip

\noindent{\bf Stationarity.} The KPSS and ADF statistical procedures both suggest that, on 6 months of consumption, $(\wh{\veps}_{t})$ is not stationary whereas $(\nabla \wh{\veps}_{t})$, $(\nabla_{\! 24} \wh{\veps}_{t})$ and $(\nabla \nabla_{\! 24} \wh{\veps}_{t})$ are stationary. As a consequence, $(\wh{\veps}_{t})$ is difference-stationary and the differenced series can all be solutions of a causal ARMA modelling, which leads to ARIMA models with $d=1$ and SARIMA models with $d=0$ and $D=1$ or $d=1$ and $D=1$.

\medskip

\noindent{\bf Autocorrelations.} On the ACF and PACF of $(\wh{\veps}_{t})$, one can clearly observe the daily periodicity of the series, as it appears on Figure \ref{Fig_ResACF}.
\begin{figure}[h!]
\includegraphics[width=7.2cm]{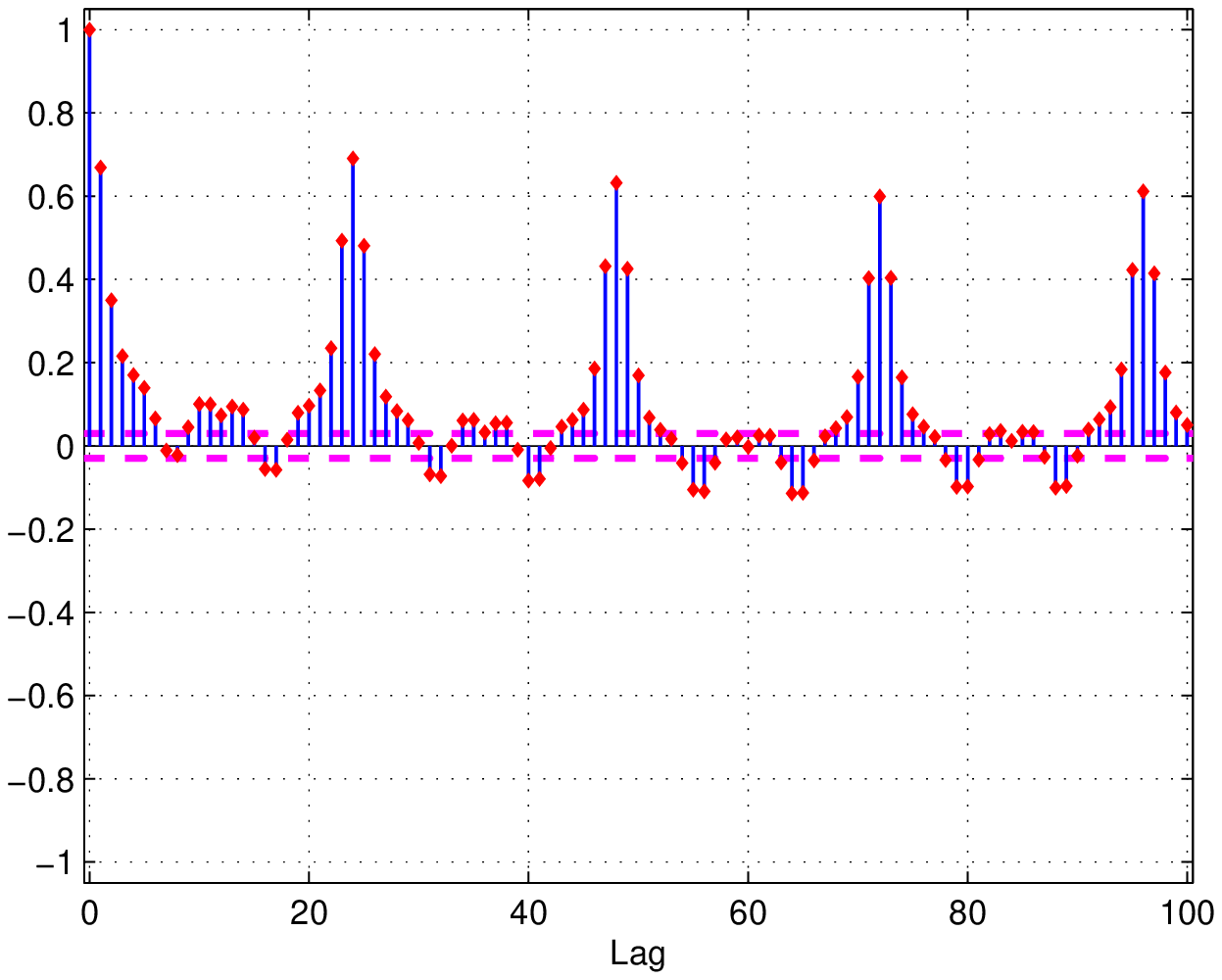} \includegraphics[width=7.2cm]{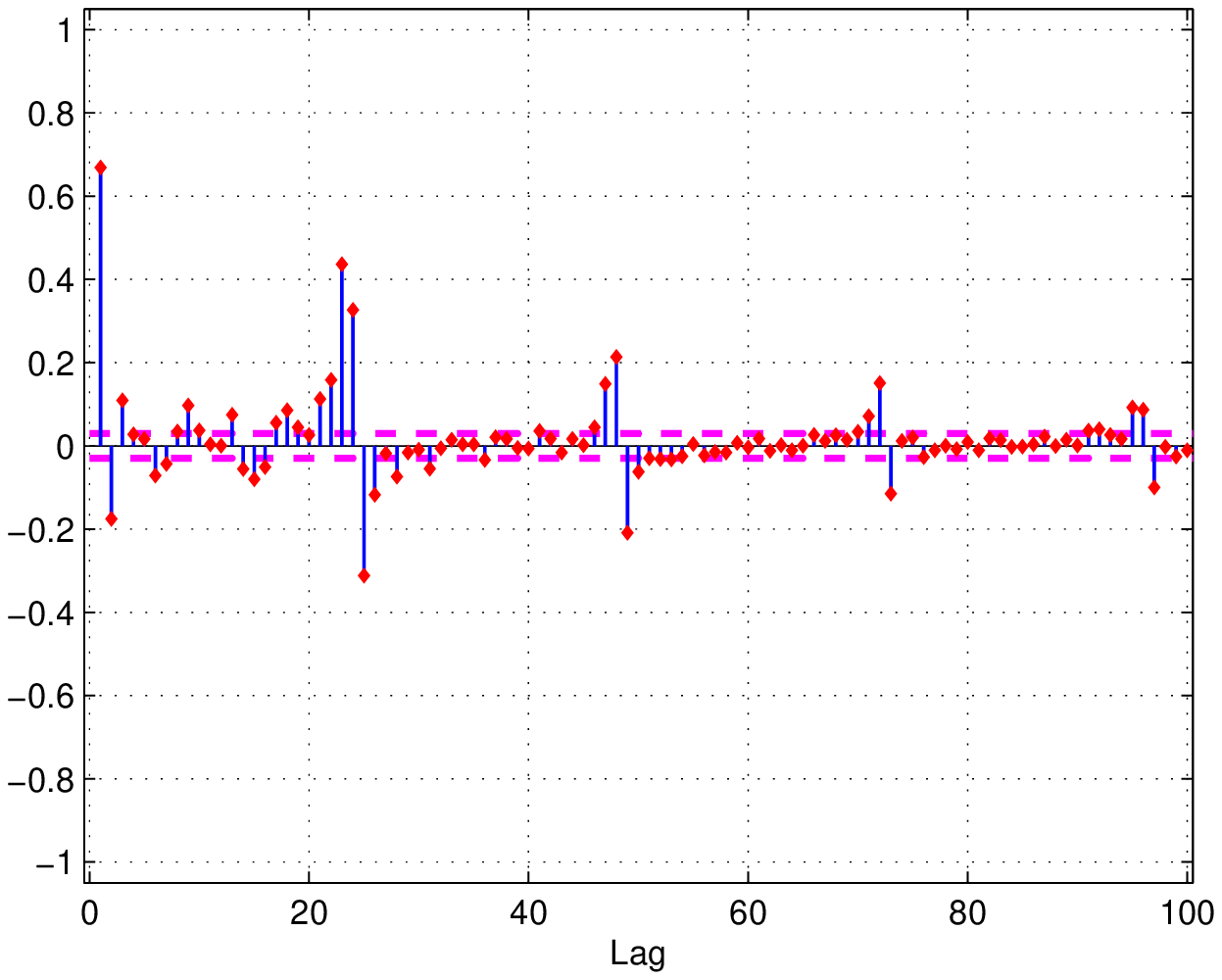}
\caption{ACF (left) and PACF (right) of the residuals $(\wh{\veps}_{t})$.}
\label{Fig_ResACF}
\end{figure}

\newpage

\noindent In addition, the sample ACF of $(\nabla_{\! 24} \wh{\veps}_{t})$ on Figure \ref{Fig_ResDiffsACF} below shows either an exponential decay or a mixture with damped sine wave, while the sample PACF has a relatively large spike at lag 1 and can reasonably be considered as nonsignificant afterwards, with uncertainty up to lag 5. One can also detect a pattern around lag 24 on the ACF. This behavior suggests an AR$(p)$ modelling with a seasonal moving average autoregression on the seasonally differenced series, that is a SARIMA$(p,0,0) \times (0,1,Q)_{24}$ model with $p \leq 5$, and $Q = 1$ on $(\wh{\veps}_{t})$.
\begin{figure}[h!]
\includegraphics[width=7.2cm]{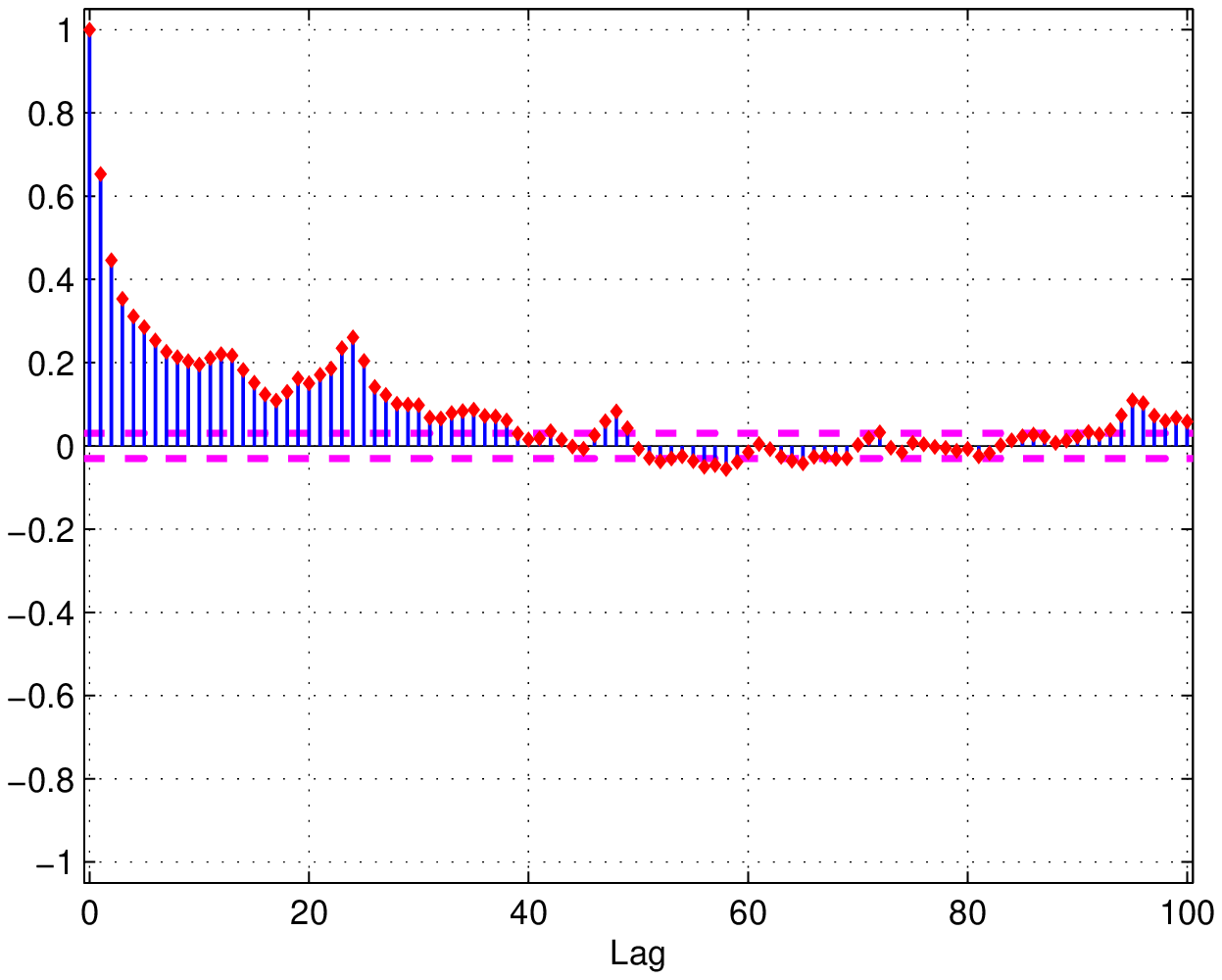} \includegraphics[width=7.2cm]{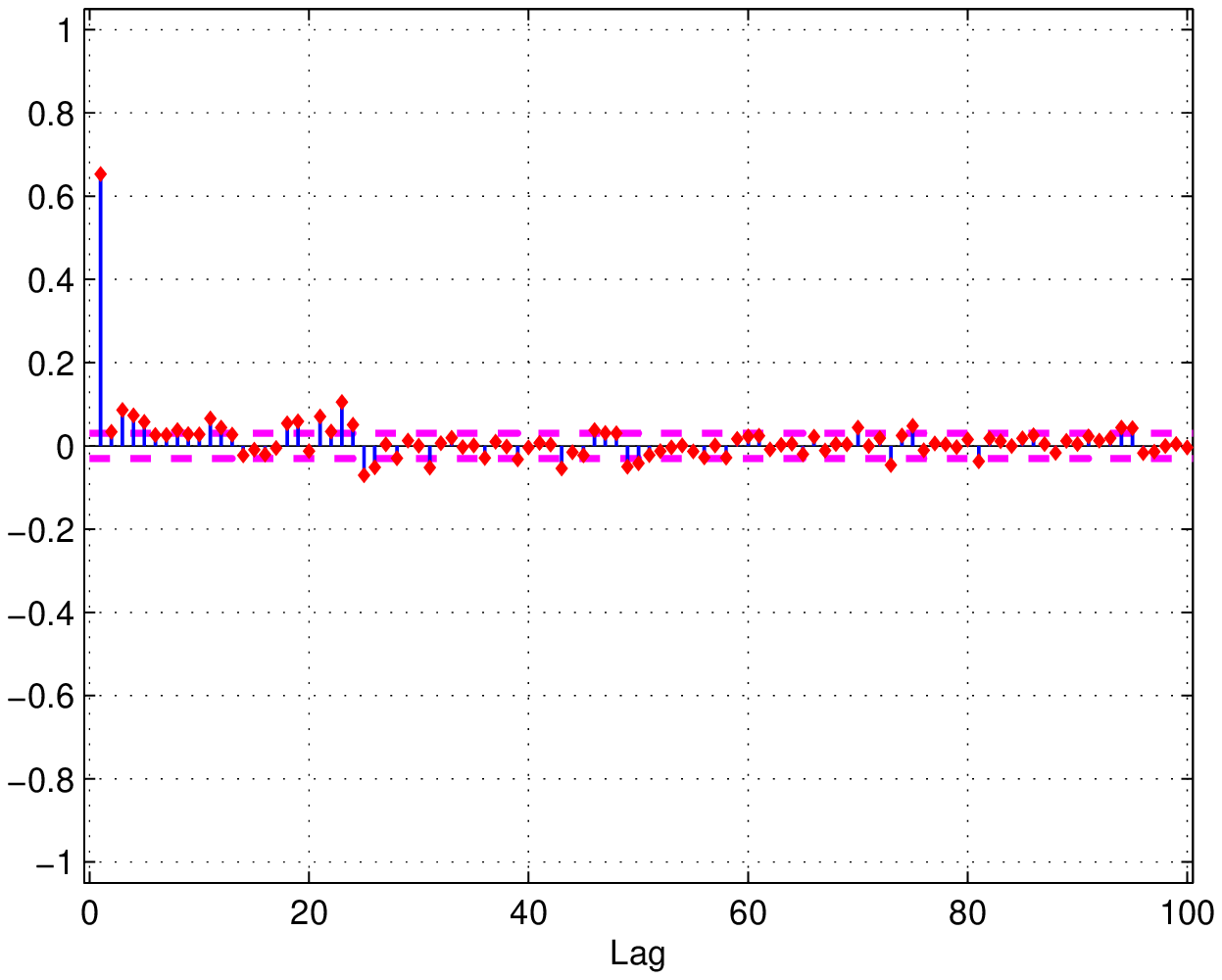}
\caption{ACF (left) and PACF (right) of the seasonally differenced residuals of period 24 $(\nabla_{\! 24} \wh{\veps}_{t})$.}
\label{Fig_ResDiffsACF}
\end{figure}

\medskip

\noindent Finally, Figure \ref{Fig_ResDiff2ACF} displays the sample ACF and PACF of $(\nabla \nabla_{\! 24} \wh{\veps}_{t})$. One can observe that the PACF tails off exponentially from lag 1 and that the ACF cuts off after lag 2, with small seasonal contributions. As a consequence, the series is likely to be generated by a SARIMA$(p,1,q) \times (0,1,Q)_{24}$ process with $p=1$, $q=2$ and $Q = 1$.
\begin{figure}[h!]
\includegraphics[width=7.2cm]{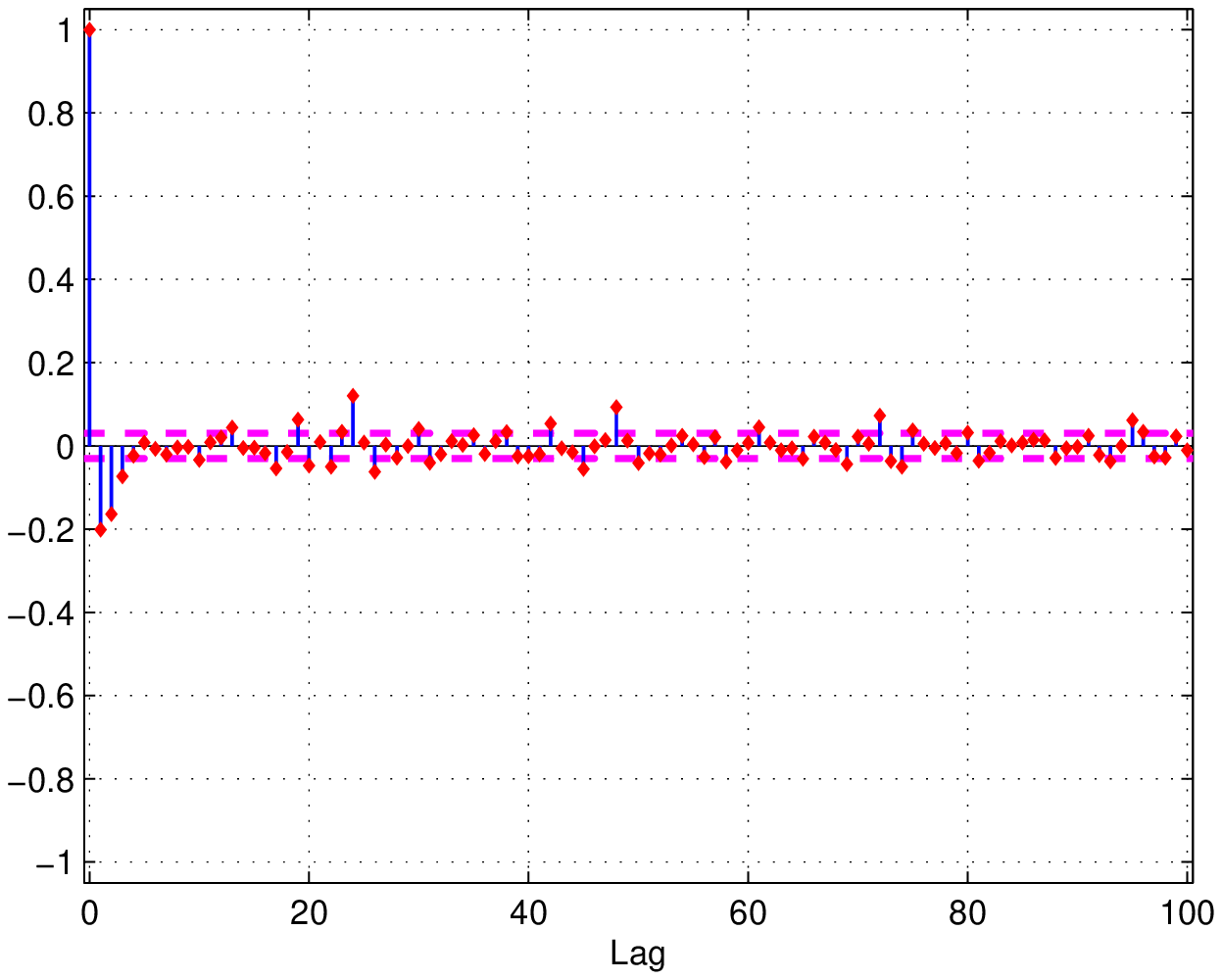} \includegraphics[width=7.2cm]{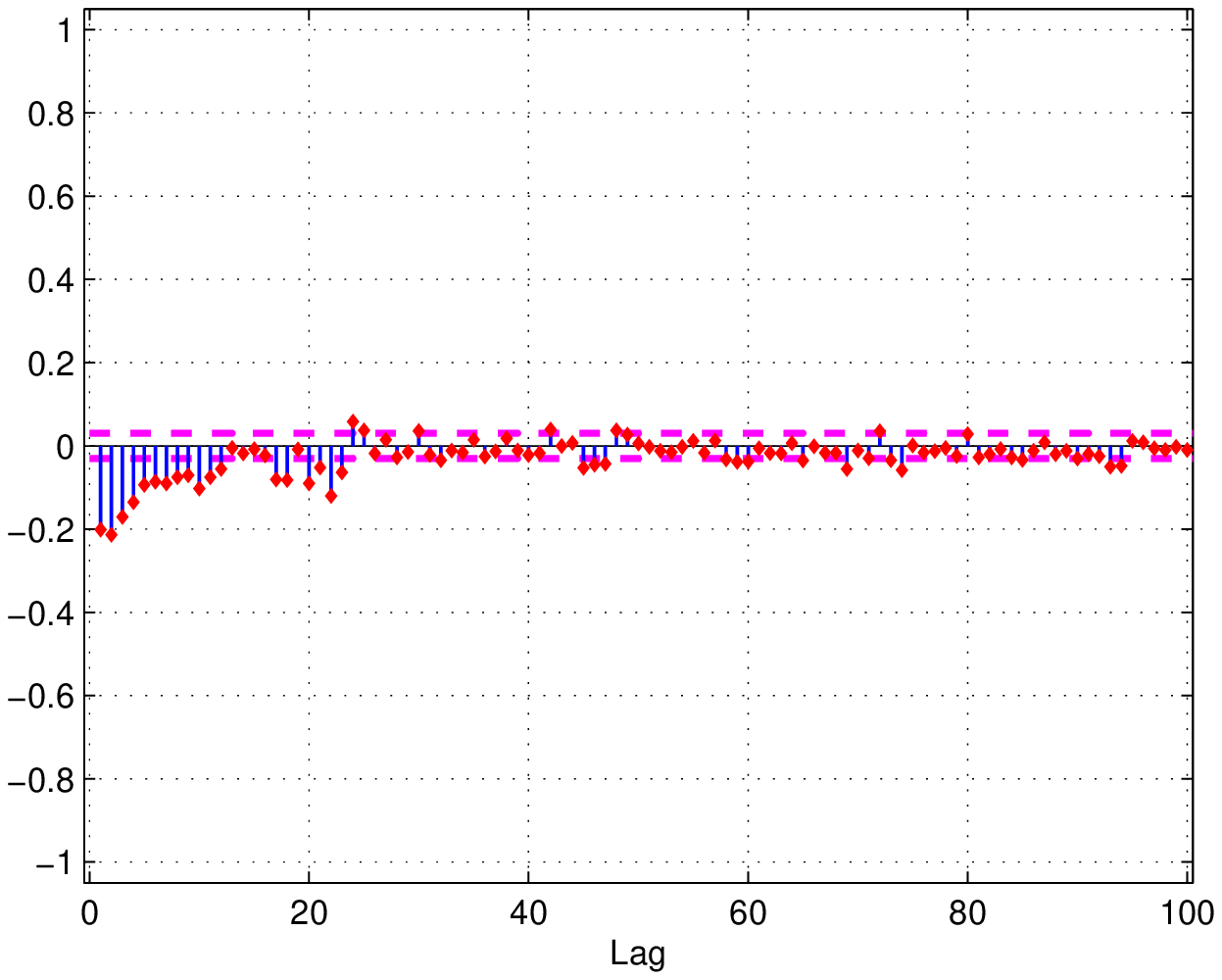}
\caption{ACF (left) and PACF (right) of the doubly differenced residuals of period 24 $(\nabla \nabla_{\! 24} \wh{\veps}_{t})$.}
\label{Fig_ResDiff2ACF}
\end{figure}

\medskip

\noindent{\bf Modelling.}  This identification methodology seems quite rough, and one shall make the parameters vary in their neighborhood to determine the optimal modelling. Table \ref{Tab_ModCrit} gives the bayesian criteria associated with a set of SARIMAX models fitted on 6 months of consumption. AIC, SBC, LL and WN respectively stand for \textit{Akaike Information Criterion}, \textit{Schwarz Bayesian Criterion}, \textit{Log-Likelihood} and \textit{White Noise}. Let us recall that
\begin{equation*}
\textnormal{AIC} = -2 \log \cL + 2k  \hspace{1cm} \textnormal{and} \hspace{1cm} \textnormal{SBC} = -2 \log \cL + k \log T
\end{equation*}
where $\cL$ is the model likelihood and $k$ is the number of parameters. In addition, VAR is the estimated variance of $(V_{t})$. The Ljung-Box \textit{portmanteau} test is used to evaluate the hypothesis of white noise on the fitted innovations, considering arbitrarily that $(V_{t})$ is a white noise if it has nonsignificant autocorrelations up to lag 3.

\begin{table}[h!]
\tiny
\begin{tabular}{|c|c|c|c|c|c|c|c|c||c|c|c|c|c|}
\cline{2-14}
\multicolumn{1}{c|}{} & $p$ & $d$ & $q$ & $r$ & $P$ & $D$ & $Q$ & $s$ & AIC & SBC & LL & VAR & WN \\
\hline
SARIMAX & 1 & 0 & 0 & 1 & 0 & 1 & 1 & 24 & -362.4 & -330.4 & 186.2 & 0.053 & \\
\hline
SARIMAX & 3 & 0 & 0 & 1 & 0 & 1 & 1 & 24 & -393.6 & -348.8 & 203.8 & 0.053 & \checkmark \\
\hline
SARIMAX & 5 & 0 & 0 & 1 & 0 & 1 & 1 & 24 & -424.7 & -367.2 & 221.4 & 0.053 & \checkmark \\
\hline
SARIMAX & 3 & 0 & 2 & 1 & 0 & 1 & 1 & 24 & -446.9 & -389.4 & 232.5 & 0.052 & \checkmark \\
\hline
SARIMAX & 3 & 0 & 2 & 2 & 0 & 1 & 1 & 24 & -457.2 & -393.3 & 238.6 & 0.052 & \checkmark \\
\hline
SARIMAX & 0 & 1 & 2 & 1 & 0 & 1 & 1 & 24 & -238.7 & -200.4 & 125.4 & 0.055 & \\
\hline
SARIMAX & 1 & 1 & 1 & 1 & 0 & 1 & 1 & 24 & -389.5 & -351.2 & 200.8 & 0.053 & \\
\hline
SARIMAX & 2 & 1 & 2 & 1 & 0 & 1 & 1 & 24 & -435.1 & -384.0 & 225.6 & 0.052 & \checkmark \\
\hline
\end{tabular}
\vspace{0.2cm} \\
\caption{Bayesian criteria associated with a set of SARIMAX models on 6 months of consumption.}
\normalsize
\label{Tab_ModCrit}
\end{table}
\vspace{-0.5cm}

\noindent Estimations come from an optimized mixing of conditional sum-of-squares and maximum likelihood \cite{BrockwellDavis96}, \cite{DurbinKoopman01}, \cite{GardnerHarveyPhillips80}, \cite{Harvey93}. In order to keep this section brief, only the most representative results are summarized in the table above even if more models have been evaluated. In conclusion, on the basis of the bayesian criteria, the most adequate modelling for the given load curve is a SARIMAX$(3,0,2,2) \times (0,1,1)_{24}$ whose explicit expression is as follows, for all $28 \leq t \leq T=4380$,
\begin{equation}
\label{ModEst}
\vspace{1ex}
\left\{
\begin{array}[c]{l}
Y_{t} = c_0 + c_1 U_{t} + c_2 U_{t-1} + \veps_{t}, \vspace{0.2cm}\\
\veps_{t} = \veps_{t-24} + a_1 ( \veps_{t-1} - \veps_{t-25} ) + a_2 ( \veps_{t-2} - \veps_{t-26} ) + a_3 ( \veps_{t-3} - \veps_{t-27} ) \\
\hspace{1cm} + ~ ( V_{t} - b_1 V_{t-1} - b_2 V_{t-2} ) - \beta_1 ( V_{t-24} - b_1 V_{t-25} - b_2 V_{t-26} ),
\end{array}
\right.
\end{equation}
in which the estimates at stage $T=4380$ are approximately given by
\begin{equation*}
\wh{c}_{0} = 7.9871, \hspace{0.3cm} \wh{c}_{1} = 0.0166, \hspace{0.3cm} \wh{c}_{2} = -0.0420, \hspace{0.3cm} \wh{a}_{1} = 0.4776, \hspace{0.3cm} \wh{a}_{2} = 0.9030,
\end{equation*}
\begin{equation*}
\wh{a}_{3} = -0.4305, \hspace{0.3cm} \wh{b}_{1} = 0.0801, \hspace{0.3cm} \wh{b}_{2} = -0.8524, \hspace{0.3cm} \wh{\beta}_{1} = -0.8125, \hspace{0.3cm} \wh{\sigma}^{\, 2} = 0.0522,
\end{equation*}
and the $t-$statistics of the time series coefficients, justifying their significance, by
\begin{equation*}
t_{\wh{a}_{1}} = 12.58, \hspace{0.3cm} t_{\wh{a}_{2}} = 26.13, \hspace{0.3cm} t_{\wh{a}_{3}} = 15.53, \hspace{0.3cm} t_{\wh{b}_{1}} = 2.50, \hspace{0.3cm} t_{\wh{b}_{2}} = 28.17, \hspace{0.3cm} t_{\wh{\beta}_{1}} = 75.42.
\end{equation*}

\medskip

\noindent Moreover, one can easily check that the estimation of the autoregressive polynomial $\wh{\cA}(z) = 1 - \wh{a}_{1} z - \wh{a}_{2} z^2 - \wh{a}_{3} z^3$ is causal, for all $z \in \dC$. The fitted values $(\wh{C}_{t})$ are obtained \textit{via} \eqref{Y}, that is, for all $28 \leq t \leq T$,
\begin{equation}
\label{ModEstC}
\wh{C}_{t} = \mathrm{e}^{\wh{Y}_{t}} - \mathrm{e}^{\mu}
\end{equation}
with $\mu=5$. On Figures \ref{Fig_ModLog} and \ref{Fig_Mod}, the fitted values $(\wh{Y}_{t})$ and $(\wh{C}_{t})$ from model \eqref{ModEst} and \eqref{ModEstC} are represented over the logarithmic load curve $(Y_{t})$ and the real load curve $(C_{t})$, respectively, with a zoom. The temperature $(U_{t})$ during the same period is represented on Figure \ref{Fig_Temp}. One can see that, except for some unpredictable local behaviors related to the individual nature of the curve, there is a pretty good adequation between modeled and real values.
\begin{figure}[h!]
\includegraphics[width=15cm]{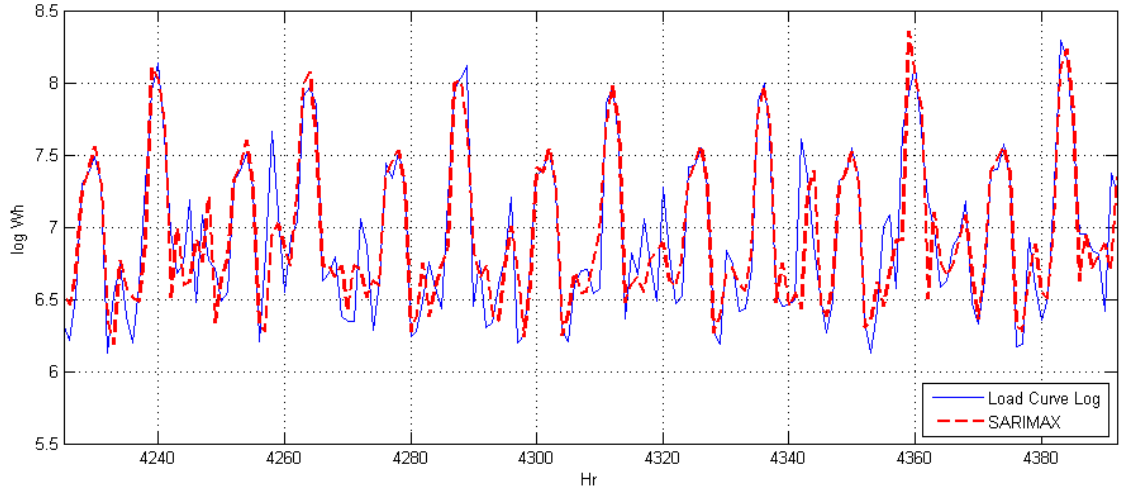}
\caption{SARIMAX$(3,0,2,2) \times (0,1,1)_{24}$ modelling on the logarithmic curve in red over the observed values in blue.}
\label{Fig_ModLog}
\end{figure}
\begin{figure}[h!]
\includegraphics[width=15cm]{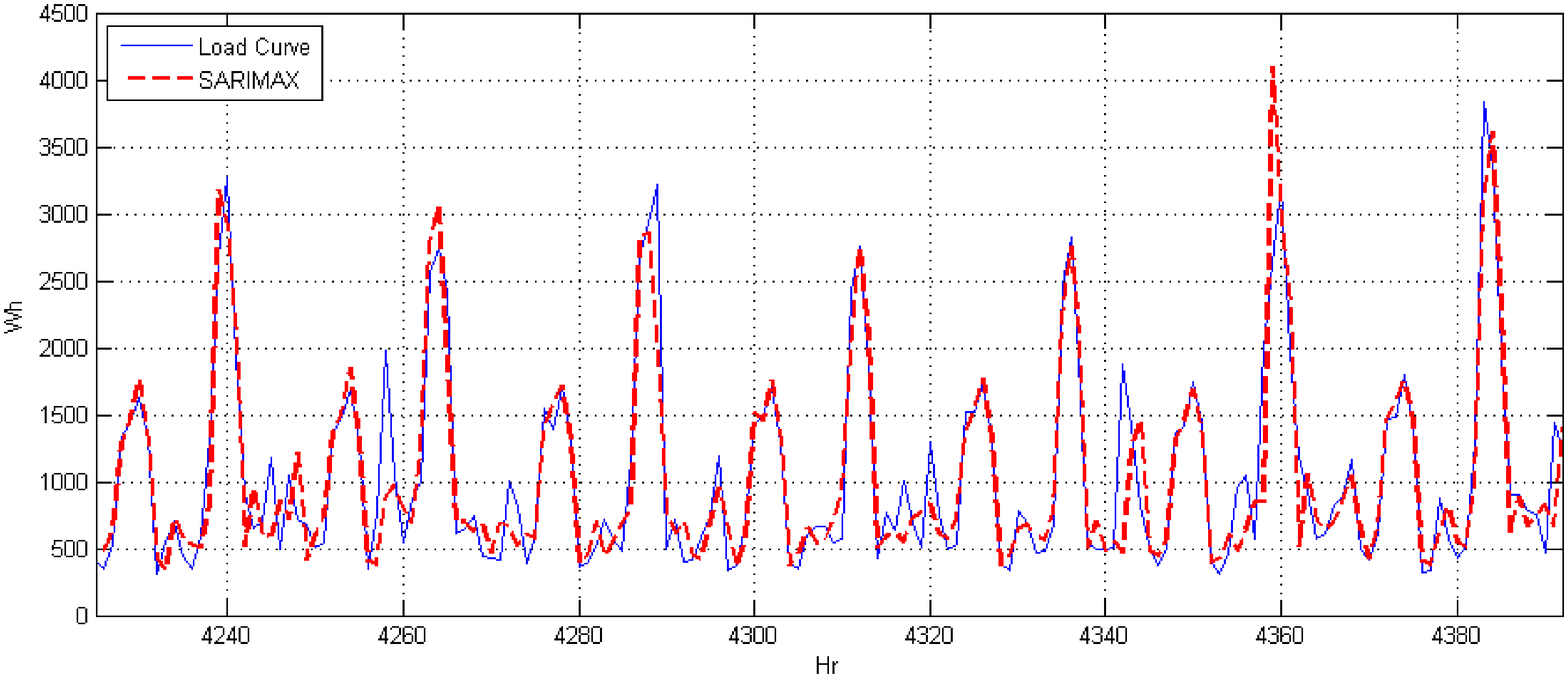}
\caption{SARIMAX$(3,0,2,2) \times (0,1,1)_{24}$ modelling on the real curve in red over the observed values in blue.}
\label{Fig_Mod}
\end{figure}
\begin{figure}[h!]
\includegraphics[width=15cm]{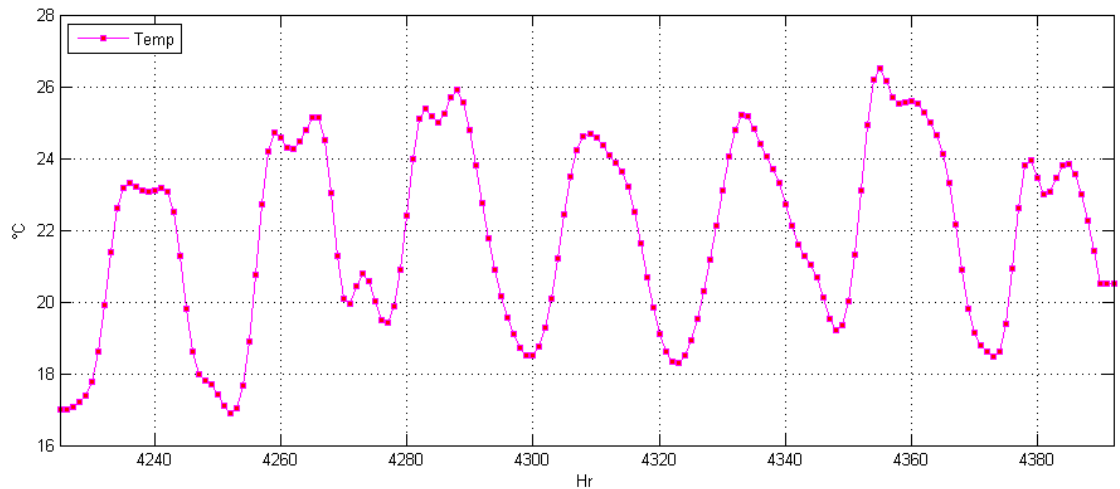}
\vspace{-0.8cm}
\caption{Temperature measured on the same period by the nearest weather station.}
\label{Fig_Temp}
\end{figure}

\medskip

\noindent{\bf Forecasting.} Our goal is now to propose an intraday forecasting methodology for the electrical consumption of individual customers. Let us start by introducing two criteria that will help us to select the most suitable forecasting model. Denote by $\wt{C}_{T+1}, \hdots, \wt{C}_{T+ N H}$ the values of $N$ consecutive predictions at horizon $H$ from time $T$. Then, the absolute criterion $C_{A}$ and the relative criterion $C_{R}$ are defined as follows,
\begin{equation*}
C_{A} = \frac{1}{N H} \sum_{k=1}^{N H} \left\vert \wt{C}_{T+k} - C_{T+k} \right\vert \hspace{0.5cm} \text{and} \hspace{0.5cm} C_{R} = \left(\sum_{k=1}^{N H} C_{T+k} \right)^{\hspace{-0.2cm} -1} \sum_{k=1}^{N H} \left\vert \wt{C}_{T+k} - C_{T+k} \right\vert.
\end{equation*}
Following the same lines as in the identification step, one has to make the parameters vary in their neighborhood to determine the most powerful forecasting model, considering the SARIMAX$(3,0,2,2) \times (0,1,1)_{24}$ modelling as a basis. A Kalman filtering finite-history prediction method \cite{DurbinKoopman01}, \cite{Harvey93}, \cite{HarveyMcKenzie82} is used to produce $(\wt{Y}_{T+k})$ from the modelling, for all $1 \leq k \leq N H$, and the forecasts $(\wt{C}_{T+k})$ are obtained by
\begin{equation}
\label{PredC}
\wt{C}_{T+k} = \mathrm{e}^{\wt{Y}_{T+k}} - \mathrm{e}^{\mu}
\end{equation}
with $\mu=5$. 

\begin{rem}
It is important to note that $\wt{C}_{T+1}, \hdots, \wt{C}_{T+ H}$ is not a sequence of predictions at horizon $H$ but a sequence in which only the last component is a prediction at horizon $H$. By misuse of language, one shall consider in the sequel that a sequence of predictions at horizon $H$ corresponds to $H$ successive predictions without additional meantime information. By extension, a sequence of $N$ predictions at horizon $H$ needs $N$ estimations of the parameters.
\end{rem}

\noindent Our experiments are based on $N=14$ days of daily forecasting, \textit{i.e.} $H=24$, the coefficients are evaluated on 3 months of data, that is $T = 2190$, and the numerical results are summarized in the Table \ref{Tab_PredCrit} below.

\medskip

\begin{table}[h!]
\tiny
\begin{tabular}{|c|c|c|c|c|c|c|c|c||c|c|}
\cline{2-11}
\multicolumn{1}{c|}{} & $p$ & $d$ & $q$ & $r$ & $P$ & $D$ & $Q$ & $s$ & $C_{A}$ & $C_{R}$ \\
\hline
SARIMAX & 1 & 0 & 0 & 1 & 0 & 1 & 1 & 24 & 241.0 & 0.2279 \\
\hline
SARIMAX & 1 & 0 & 1 & 2 & 0 & 1 & 1 & 24 & 242.2 & 0.2290 \\
\hline
SARIMAX & 3 & 0 & 0 & 1 & 0 & 1 & 1 & 24 & 245.1 & 0.2318 \\
\hline
SARIMAX & 3 & 0 & 2 & 1 & 0 & 1 & 1 & 24 & 251.6 & 0.2380 \\
\hline
SARIMAX & 3 & 0 & 2 & 2 & 0 & 1 & 1 & 24 & 250.3 & 0.2368 \\
\hline
SARIMAX & 1 & 1 & 1 & 1 & 0 & 1 & 1 & 24 & 253.8 & 0.2400 \\
\hline
SARIMAX & 2 & 1 & 2 & 1 & 0 & 1 & 1 & 24 & 254.1 & 0.2403 \\
\hline
\end{tabular}
\vspace{0.2cm} \\
\caption{Prediction criteria associated with a set of daily SARIMAX forecasts on 3 months of consumption.}
\normalsize
\label{Tab_PredCrit}
\end{table}

\noindent The parsimony in the time series analysis is a central issue in forecasting applications, and it is not surprising that the models minimizing $C_{A}$ and $C_{R}$ are not the same as those minimizing the bayesian criteria, and tend to reject uncertainty coming from overparametrization. Moreover, by selecting an optimal sliding window in the modelling, one is able to slightly improve our results. For example, the SARIMAX$(1,0,0,2) \times (0,1,1)_{24}$ model provides $C_{A}=231.0$ and $C_{R}=0.2185$ in the particular case where $M=1$ month, that is $T=730$. On Figure \ref{Fig_PredMR}, we investigate the influence of the size of the sliding window $M$ together with the one of the exogenous regression dimension $r$ on the relative criterion $C_{R}$ for the latter modelling and the same experiment. This enables us to select the most powerful forecasting model for this particular curve.
\begin{figure}[h!]
\includegraphics[width=14cm]{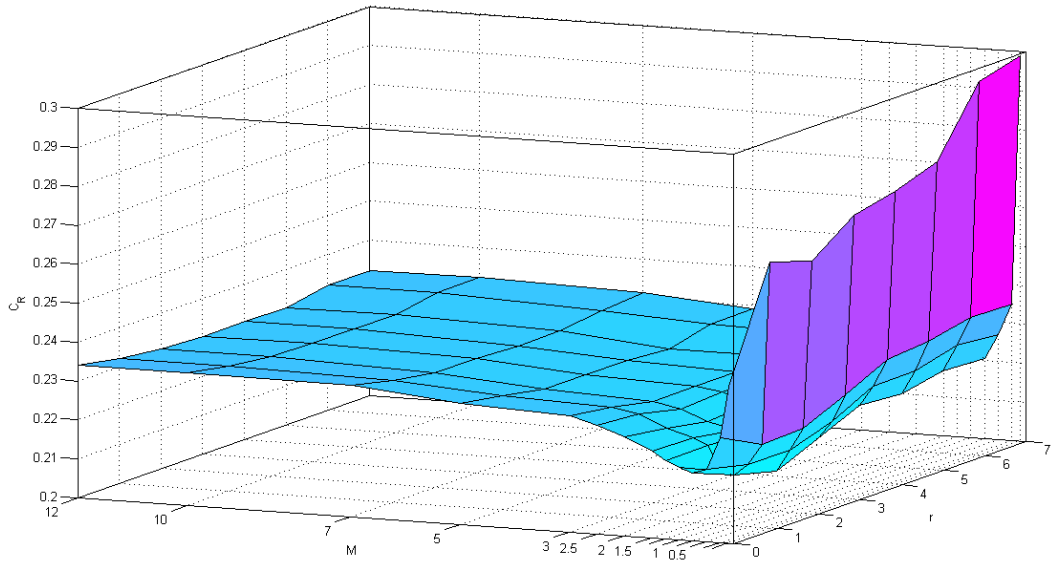}
\vspace{-0.2cm} \\
\caption{Influence of $r$ and $M$ on $C_{R}$ calculated from 14 daily forecasts from the SARIMAX$(1,0,0,r) \times (0,1,1)_{24}$ modelling.}
\label{Fig_PredMR}
\end{figure}

\medskip

\noindent Accordingly, one shall consider the SARIMAX$(1,0,0,2) \times (0,1,1)_{24}$ modelling with $M=0.75$ month, even if one can see that $r$ is not playing a substantial role as soon as it is greater than 2 for a reason of strong correlation of the exogenous phenomenon, already mentioned above. One can also observe that prediction results can be improved when the parameters are evaluated on rather small amounts of data. This can once again be explained by the nature of the curve and the underlying human behavior whose consumption is highly influenced by local circumstances such as weather, holiday period, etc. Whereas too few data are  not sufficient to take into account the seasonality of period 24 and properly estimate the very significant $\beta_1$ parameter, conversely, results tend to stabilize when $M$ increases disproportionately. The explicit expression of the predictive model is as follows, for all $26 \leq t \leq T=548$,
\begin{equation}
\label{ModPred}
\vspace{1ex}
\left\{
\begin{array}[c]{l}
Y_{t} = c_0 + c_1 U_{t} + c_2 U_{t-1} + \veps_{t}, \vspace{0.2cm}\\
\veps_{t} = \veps_{t-24} + a_1 ( \veps_{t-1} - \veps_{t-25} ) + V_{t} - \beta_1 V_{t-24},
\end{array}
\right.
\end{equation}
in which the estimates at stage $T=548$ are approximately given by
\begin{equation*}
\wh{c}_{0} = 7.2494, \hspace{0.15cm} \wh{c}_{1} = 0.0497, \hspace{0.15cm} \wh{c}_{2} = -0.0629, \hspace{0.15cm} \wh{a}_{1} = 0.3540, \hspace{0.15cm} \wh{\beta}_{1} = -0.7086, \hspace{0.15cm} \wh{\sigma}^{\, 2} = 0.0708,
\end{equation*}
and the $t-$statistics of the time series coefficients, justifying their significance, by
\begin{equation*}
t_{\wh{a}_{1}} = 8.66, \hspace{0.3cm} t_{\wh{\beta}_{1}} = 21.02.
\end{equation*}

\medskip

\noindent The estimation of the autoregressive polynomial $\wh{\cA}(z) = 1 - \wh{a}_{1} z$ is actually causal, for all $z \in \dC$. On Figures \ref{Fig_PredLog} and \ref{Fig_Pred}, we display an example of 7 daily predictions from the latter model and a sliding window of 0.75 month of consumption, for the logarithmic curve $(\wt{Y}_{t})$ as well as for the load curve $(\wt{C}_{t})$. It also contains the 95\% and 90\% prediction confidence intervals, rather large owing to the horizon of prediction.
\begin{figure}[h!]
\includegraphics[width=15cm]{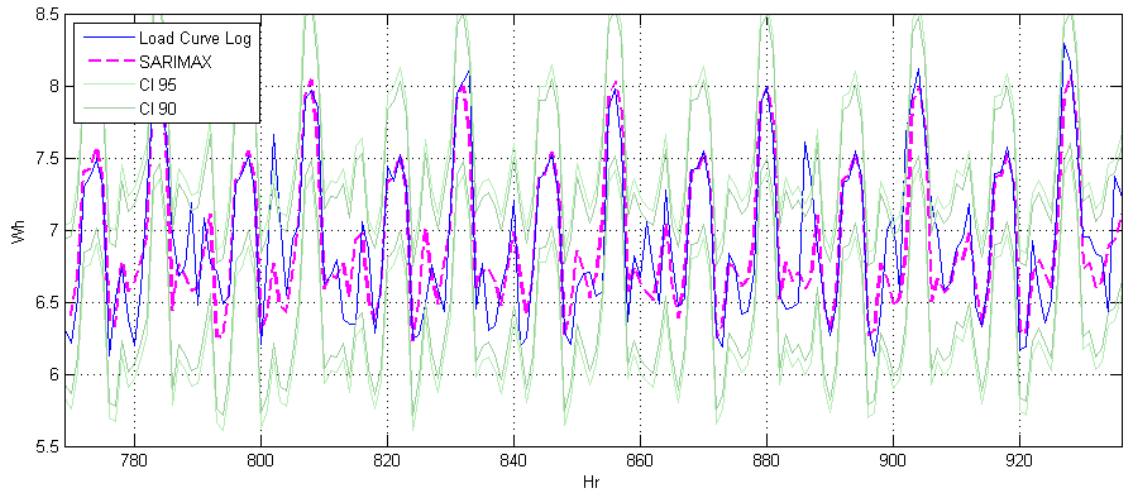}
\caption{SARIMAX$(1,0,0,2) \times (0,1,1)_{24}$ daily predictions on the logarithmic curve in magenta over the observed values in blue.}
\label{Fig_PredLog}
\end{figure}

\begin{figure}[h!]
\includegraphics[width=15cm]{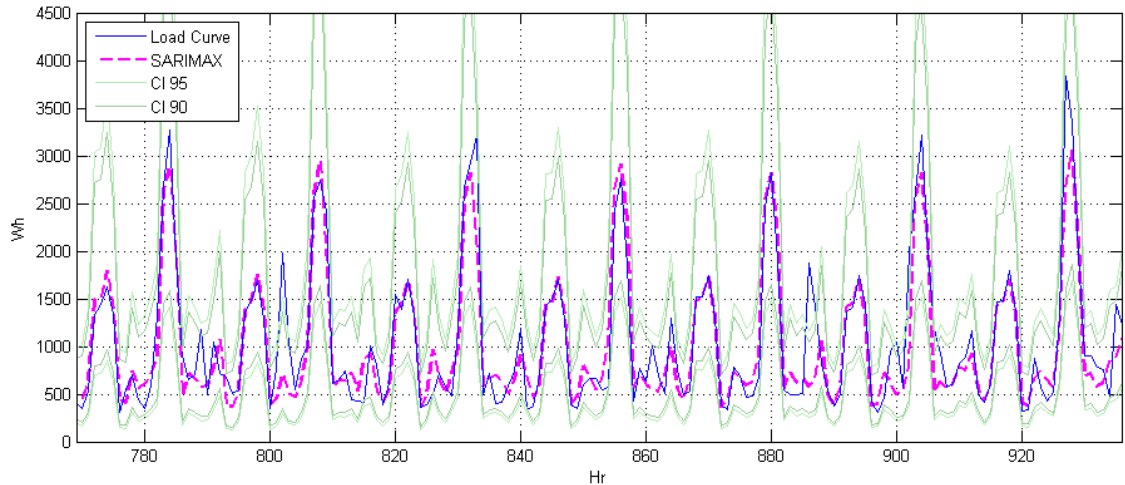}
\caption{SARIMAX$(1,0,0,2) \times (0,1,1)_{24}$ daily predictions on the real curve in magenta over the observed values in blue.}
\label{Fig_Pred}
\vspace{1.5cm}
\end{figure}

\medskip

\noindent Note that $\wt{Y}_{T+1}, \hdots, \wt{Y}_{T+ N H}$ only differs of 3\% from $Y_{T+1}, \hdots, Y_{T+ N H}$ when we consider the whole 14 daily predictions, that is $NH=336$. Once again, one can notice that the coupled dynamic model provides very interesting results of prediction as soon as it is correctly specified, in spite of the noise on the load curve due to its individual nature.


\section{CONCLUSION}


To conclude, we would like to draw the significance of the exogenous covariates to the reader's attention. Indeed, let us notice that in some cases, the empirical study suggests to select $r=0$, meaning no temperature influence despite the manifest linear relation between the latter and the consumption. The authors interpret this observation by the fact that seasonality and local circumstances totally prevail on the effect of temperature and that all information has already been recovered by the deep study of the signal, resulting in very few significant coefficients and equivalent forecasts for $r=0, 1, \hdots$ In addition, one should not overlook the possible irrelevance of the temperature measured by the weather station related to a customer without other criterion than geolocation, especially when altitude is concerned, coastal residence, cloud covering, or more generally when substantial differences may be observed at the same time between the weather station and the customer's home. Also, we should not forget that the exogenous inputs assumed to be known during the prediction period of the time series are nothing else than predictions themselves, with all attendant uncertainty. In addition to the required seasonality, the relevance of the exogenous measures is a central issue for this approach to be applied to a load curve with profit. Nevertheless, and despite many irregularities due to the individual nature of the curves, this study shows that some very interesting results of daily forecasts may be obtained under certain conditions already described, and above all a careful study of each curve. Finally, this intraday forecasting approach has been conducted
on a whole set of individual customers from EDF, leading to
the same satisfactory conclusions.

\newpage

\nocite{*}

\bibliographystyle{acm}
\bibliography{SARIMAX0312}

\vspace{10pt}

\end{document}